\crefname{section}{section}{sections} 
\crefname{page}{page}{pages} 
\crefname{table}{table}{tables} 
\crefname{figure}{fig.}{figs.} 
\pgfplotsset{compat=1.16}
\newcommand{\N}{\mathbb{N}}
\newcommand{\fO}{\mathcal{O}}
\newcommand{\bd}{\partial}
\newcommand{\bdsize}[1]{|\bd(#1)|}
\newcommand{\cut}{c}
\newcommand{\tparent}{\mathtt{parent}}
\newcommand{\tdsepchild}{\mathtt{dsep\_child}}
\newcommand{\tisepchild}{\mathtt{isep\_child}}
\newcommand{\tpdist}{\mathtt{pdist}}
\newcommand{\tadist}{\mathtt{adist}}
\newcommand{\tdroot}{\mathtt{droot}}
\newcommand{\tsplayStep}{\mathtt{splay\_step}}
\newcommand{\tNodeToRoot}[1]{\Call{NodeToRoot}{#1}}
\newcommand{\tNodeBelowRoot}[1]{\Call{NodeBelowRoot}{#1}}
\newcommand{\opLink}[1]{\Call{Link}{#1}}
\newcommand{\opCut}[1]{\Call{Cut}{#1}}
\newcommand{\opPathWeight}[1]{\Call{PathWeight}{#1}}
\newcommand{\opFindRoot}[1]{\Call{FindRoot}{#1}}
\newcommand{\opEvert}[1]{\Call{Evert}{#1}}
\newcommand{\opLCA}[1]{\Call{LCA}{#1}}
\newcommand{\implfile}[1]{\path{#1}}
\newcommand{\implitem}[1]{\path{#1}}
\newcommand{\dsimpl}[1]{\texttt{#1}}
\newtheorem{theorem}{Theorem}[section]
\newtheorem{lemma}[theorem]{Lemma}
\newtheorem{observation}[theorem]{Observation}
\begin{document}

\title{Fast and simple unrooted dynamic forests}
\author{Benjamin Aram Berendsohn\thanks{Freie Universit\"at Berlin, Germany. Email: \texttt{benjamin.berendsohn@fu-berlin.de}. Supported by DFG Grant \mbox{KO~6140/1-2}.}}

\date{\vspace{-5ex}} 

\maketitle

\begin{abstract}
	A \emph{dynamic forest} data structure maintains a forest (and associated data like edge weights) under edge insertions and deletions. Dynamic forests are widely used to solve online and offline graph problems. Well-known examples of dynamic forest data structures are \emph{link-cut trees} \cite{SleatorTarjan1983} and \emph{top trees} \cite{AlstrupEtAl2005}, both of which need $\fO(\log n)$ time per operation. While top trees are more flexible and arguably easier to use, link-cut trees are faster in practice~\cite{TarjanWerneck2010}.
	
	In this paper, we propose an alternative to link-cut trees. Our data structure is based on \emph{search trees on trees} (STTs, also known as \emph{elimination trees}) and an STT algorithm~\cite{BerendsohnKozma2022a} based on the classical Splay trees~\cite{SleatorTarjan1985}.
	While link-cut trees maintain a hierarchy of binary search trees, we maintain a single STT. Most of the complexity of our data structure lies in the implementation of the \emph{STT rotation} primitive, which can easily be reused, simplifying the development of new STT-based approaches.
	
	We implement several variants of our data structure in the \emph{Rust} programming language, along with an implementation of link-cut trees for comparison.
	Experimental evaluation suggests that our algorithms are faster when the dynamic forest is unrooted, while link-cut trees are faster for rooted dynamic forests.
\end{abstract}

\section{Introduction}

Maintaining a dynamically changing forest along with data associated to vertices, edges, or trees is a well-known problem with a forty-year history. Sleator and Tarjan \cite{SleatorTarjan1983} first introduced a data structure (commonly called \emph{link-cut trees}) for this task, with worst-case running time $\fO( \log n )$ per operation, where $n$ is the number of vertices in the forest. The same authors proposed a simplified amortized variant of the data structure using their \emph{Splay trees} \cite{SleatorTarjan1985}. Several alternative data structures have since been proposed, including \emph{topology trees} \cite{Frederickson1985}, \emph{ET-trees} \cite{HenzingerKing1999,Tarjan1997}, \emph{RC-trees} \cite{AcarEtAl2004} and \emph{top trees} \cite{AlstrupEtAl2005,TarjanWerneck2005,HolmEtAl2023}. Top trees are more flexible and thus more widely applicable than link-cut trees. However, an experimental evaluation by Tarjan and Werneck~\cite{TarjanWerneck2010} suggests that link-cut trees, while less flexible, are faster then top trees by a factor of up to four, likely due to their relative simplicity.

Dynamic forest data structures have a large number of applications. Link-cut trees in particular have been used as a key ingredient in algorithms and data structures including, but not limited to:
minimum cut \cite{Karger2000}, maximum flow \cite{SleatorTarjan1983,GoldbergTarjan1988}, minimum-cost flow \cite{KaplanNussbaum2013}, online minimum spanning forests~\cite{Frederickson1985,EppsteinEtAl1992,EppsteinEtAl1997,CattaneoFaruoloEtAl2010}, online lowest common ancestors~\cite{SleatorTarjan1983,HarelTarjan1984}, online planarity testing~\cite{DiBattistaTamassia1996}, and geometric stabbing queries~\cite{AgarwalEtAl2012, KaplanEtAl2003}.

In this paper, we design and implement data structures that can serve as a drop-in replacement for link-cut trees, based on \emph{search trees on trees} (STTs, described in \cref{sec:intro:stt}). We experimentally compare our data structures with link-cut trees.

We focus on maintaining \emph{unrooted} forests. More precisely, our data structures maintain an edge-weighted dynamic forest and support the following three operations:
\begin{itemize}
	\itemsep0pt
	\item $\opLink{u, v, w}$ -- Adds an edge between the vertices $u$ and $v$ with weight $w$. Assumes this edge did not exist beforehand.
	\item $\opCut{u, v}$ -- Removes the edge between $u$ and $v$. Assumes this edge existed beforehand.
	\item $\opPathWeight{u, v}$ -- Returns the sum of weights of edges on the path between $u$ and $v$, or $\bot$ if $u$ and $v$ are not in the same tree.
\end{itemize}

As weights, our implementation allows arbitrary commutative monoids. For example, when using edge weights from $(\N, \max)$, the $\opPathWeight{u,v}$ method returns the maximum edge weight on the path between~$u$ and~$v$. We note that additional operations like increasing the weight of each edge on a path, or maintaining vertex weights and certain related properties are also possible, but omitted for simplicity.

\paragraph{Rooted vs.\ unrooted forests.}\label{sec:intro:rooted-unrooted}
Some applications, like the maximum flow algorithms mentioned above, require maintaining \emph{rooted} forests. In that case, a $\opFindRoot{v}$ operation is available, which returns the root of the tree containing $v$. Moreover, adding arbitrary edges is not allowed; the $\opLink{u,v}$ operation requires that $u$ is the root of its tree, and makes $v$ the parent of $u$.

The basic variant of link-cut trees maintains rooted forests. To maintain unrooted forests, an additional operation $\opEvert{v}$ is used, which makes $v$ the root of its tree, and thus enables arbitrary $\opLink{}$s. While asymptotic performance is not affected, $\opEvert{}$ does come with additional bookkeeping that may impact performance in practice.

For our data structures, the opposite is true: they ``natively'' implement unrooted forests, and maintaining rooted forests is possible only with some overhead.

Indeed, in our experimental evaluation, link-cut trees are faster for rooted forests (without evert), and our approach is faster for unrooted forests. When explicitly maintaining rooted forests with changing roots (via $\opEvert{}$), link-cut trees again appear to be slower.
We further discuss similarities and differences between link-cut trees and our data structures in \cref{sec:splaytt-vs-link-cut}.

\paragraph{Search trees on trees.}\label{sec:intro:stt}
Our approach is based on recent results~\cite{BerendsohnKozma2022a} for \emph{search trees on trees} (STTs), also known as \emph{elimination trees}. STTs are a generalization of binary search trees (BSTs) where, intuitively, the search space is a tree, and each query is a search for a vertex. We formally define STTs in \cref{sec:prelim}.

Previous papers on STTs~\cite{BoseCardinalEtAl2020,BerendsohnKozma2022a,BerendsohnEtAl2023a} have concentrated on theoretical models, where the task is to answer vertex queries, analogous to searching for keys in a BST. The underlying tree is not modified. As in the BST setting, we distinguish between the \emph{static} and \emph{dynamic} STT model, both of which we will now briefly describe.

In the static model, we are given an input distribution and need to build an STT that can answer queries with low expected cost (according to the distribution). Optimum static \emph{BSTs} on $n$ nodes can be computed in $\fO(n^2)$ time \cite{Knuth1971}, and approximated in linear time \cite{Mehlhorn1975,Mehlhorn1977}. For optimal static STTs, no exact polynomial algorithm is known, though a PTAS~\cite{BerendsohnKozma2022a} and an $\fO(n \log n)$-time 2-approximation~\cite{BerendsohnEtAl2023a} are known.

In the dynamic model, we are allowed to modify the STT after each query. These modifications are done using \emph{STT rotations}, a straight-forward generalization of BST rotations.
Typically we assume the \emph{online} dynamic model, where we are not provided with any information about the input sequence beforehand, as opposed to the \emph{offline} case, where we know the input sequence in advance.
The central open question for dynamic BSTs and STTs (called \emph{dynamic optimality}) is whether there exists a constant-competitive online algorithm, i.e., an online algorithm whose performance matches the best offline algorithm, up to a constant factor.

For BSTs, several online algorithms are conjectured to by constant-competitive, most prominently \emph{Splay}~\cite{SleatorTarjan1985} and \emph{Greedy}~\cite{Lucas1988,Munro2000}. Both of these algorithms have many useful properties, among them \emph{static optimality}, which means that they match the optimum static tree on every input sequence\footnote{Note that the static optimum ``knows'' the queries in advance.}. The best known upper bound for competitiveness is achieved by \emph{Tango trees} \cite{DemaineEtAl2007}, with a competitive ratio of $\fO(\log \log n)$.

Bose, Cardinal, Iacono, Koumoutsos, and Langerman~\cite{BoseCardinalEtAl2020} generalized Tango trees to the STT model, thus providing a $\fO(\log \log n)$-competitive algorithm, where $n$ is the number of vertices in the tree. Berendsohn and Kozma~\cite{BerendsohnKozma2022a} generalized Splay trees to STTs and proved static optimality.

\paragraph{Dynamic forests using STTs.}
A common property of dynamic BST algorithms such as Splay is that after finding a node, it is brought to the root (e.g., using the eponymous \emph{splay} operation). The same is true for Berendsohn and Kozma's Splay generalization (called \emph{SplayTT})~\cite{BerendsohnKozma2022a}. In this paper, we use SplayTT to implement dynamic forest data structures. The amortized running time per operation is $\fO(\log n)$, matching the asymptotic performance of known data structures.

We remark that many previous dynamic forest implementations are also based on or inspired by Splay, such as the amortized variant of link-cut trees \cite{SleatorTarjan1985} and some top tree implementations \cite{TarjanWerneck2005,HolmEtAl2023}. We present the first STT-based approach.

Our implementation is highly modular: it consists of (i) a basic implementation of STTs\footnote{More precisely, \emph{2-cut STTs}, defined in \cref{sec:prelim}.} and STT rotations, (ii) a routine \tNodeToRoot{} that brings an STT node to the root via rotations, and (iii) an implementation of the operations \opLink{}, \opCut{}, and \opPathWeight{} based on \tNodeToRoot{}. Most importantly, the \tNodeToRoot{} implementation can easily be replaced with a different one. We present four variants of $\tNodeToRoot{}$, three based on SplayTT, and one simpler algorithm. Future \tNodeToRoot{} implementations, developed in the dynamic STT model, would automatically provide a new dynamic forest algorithm.

We implement our data structure in the \emph{Rust} programming language.\footnote{The source code can be found at\raggedright{} \url{https://github.com/berendsohn/stt-rs}.} The modularity described above is achieved using generics, resulting in an easily extendable library. For comparison, we also implement the amortized variant of Tarjan and Sleator's link-cut trees~\cite{SleatorTarjan1983,SleatorTarjan1985}, and some very simple linear-time data structures. We experimentally compare all implementations.

In \cref{sec:prelim}, we define STTs and related concepts. In \cref{sec:2cut-impl}, we present a basic data structure that maintains a 2-cut search tree on a \emph{fixed} tree under rotations. In \cref{sec:link-cut} and \cref{sec:edge-weights}, we show how to implement the dynamic forest operations, using multiple 2-cut STTs and assuming a black-box \tNodeToRoot{} implementation. In \cref{sec:node-to-root}, we present our four \tNodeToRoot{} algorithms. In \cref{sec:experiments}, we present our experimental results, and in \cref{sec:conclusion}, we discuss our findings and propose some open questions.

\section{Preliminaries}\label{sec:prelim}

Let $G$ be a graph. The sets of vertices, resp., edges in $G$ are denoted by $V(G)$, resp., $E(G)$. The subgraph of $G$ induced by the vertex set $U \subseteq V(G)$ is denoted by~$G[U]$.

Below, we repeat definitions and observations from previous papers~\cite{BoseCardinalEtAl2020,BerendsohnKozma2022a,BerendsohnEtAl2023a}.

\paragraph{Search trees on graphs.}
A \emph{search tree} on a connected graph $G$ is a rooted tree $T$ that is constructed as follows: Choose an arbitrary vertex $r$ as the root. Then, recursively construct search trees on each connected component of $G \setminus r$ and attach them to $r$ as subtrees. We denote by $V(T)$ the set of nodes in $T$. Observe that $V(T) = V(G)$. For each node $v \in V(T)$, we denote the subtree of $T$ rooted at $v$ by $T_v$. The \emph{root path} of $v$ in $T$ is defined as the path from $v$ to the root of $T$ in $T$.

A search tree $T$ on a graph $G$ satisfies the following properties:
\begin{enumerate}[(i)]
	\item For each edge $\{u,v\} \in E(G)$, either $u$ is an ancestor of $v$ or $v$ is an ancestor of $u$;
	\item For each node $v \in V(T)$, the subgraph $G[V(T_v)]$ is connected.
\end{enumerate}

It can be shown that the above two properties in fact fully characterize search trees on~$G$.

If $G$ is a path, then search trees on $G$ essentially correspond to binary search trees on $|V(G)|$ nodes. Note, however, that children are unordered in STTs.

\paragraph{Cuts and boundaries.} 
Let $G$ be a graph and $H$ be a subgraph of $G$. The \emph{cut} of $H$, denoted $\cut_G(H)$, is the set of edges between $H$ and $G \setminus H$. The \emph{outer boundary} of $H$, denoted $\bd_G(H)$, is the set of vertices in $G \setminus H$ that are adjacent to some vertex in $H$.
Note that $|\bd_G(H)| \le |\cut_G(H)|$ and $|\bd_G(H)| = |\cut_G(H)|$ if $G$ is a tree.
If $T$ is a search tree on $G$ and $v \in V(T)$, we write $\bd(T_v) = \bd_G(V(T_v))$ for short.
The following observations will be useful later.

\begin{observation}\label{p:boundary-rules}
	Let $p$ be a node in an STT $T$, and let $v, v'$ be children of $p$. Then:
	\begin{enumerate}[(i)]
		\item $p \in \bd(T_v)$;
		\item $\bd(T_v) \subseteq \bd(T_p) \cup \{p\}$;
		\item $\bd(T_v) \cap \bd(T_{v'}) = \{p\}$.\label{item:boundary-rules:child-itersec}
	\end{enumerate}
\end{observation}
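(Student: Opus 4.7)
I would prove the three items in order, working directly from the STT definition and properties~(i) and~(ii) stated just above the observation.

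For part~(i), I would appeal to the recursive construction of $T$: the children of $p$ correspond bijectively to the connected components of $G[V(T_p)] \setminus \{p\}$, so $V(T_v)$ is one such component. Since property~(ii) says $G[V(T_p)]$ is connected, $p$ must have at least one neighbor in each of these components, in particular some neighbor in $V(T_v)$; combined with $p \notin V(T_v)$, this yields $p \in \bd(T_v)$.

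For part~(ii), I would take any $u \in \bd(T_v)$ together with a witness $w \in V(T_v)$ satisfying $\{u,w\} \in E(G)$. Property~(i) forces $u$ and $w$ to be in ancestor-descendant relation in $T$; since $u \notin V(T_v)$, $u$ cannot be a descendant of $w$, so $u$ is a strict ancestor of $w$ and hence of $v$. The strict ancestors of $v$ are exactly $p$ together with the strict ancestors of $p$. If $u = p$ we are done; otherwise $u \notin V(T_p)$, and the same edge $\{u,w\}$ with $w \in V(T_v) \subseteq V(T_p)$ witnesses $u \in \bd(T_p)$.

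For part~(iii), the inclusion $\{p\} \subseteq \bd(T_v) \cap \bd(T_{v'})$ is immediate from~(i). For the reverse, I would take $u$ in the intersection with neighbors $w \in V(T_v)$ and $w' \in V(T_{v'})$ (note $w \ne w'$, since sibling subtrees are vertex-disjoint) and assume $u \ne p$ for contradiction. The argument from~(ii) places $u$ strictly above $p$ in $T$, so $u \notin V(T_p)$. Property~(ii) furnishes a $w$-to-$w'$ path inside $G[V(T_p)]$; concatenating it with the edges $\{w,u\}$ and $\{u,w'\}$ yields a cycle through the external vertex $u$, contradicting the acyclicity of the ambient tree on which $T$ is built.

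The main obstacle is part~(iii): parts~(i) and~(ii) are straightforward unfoldings of the STT definition, but (iii) genuinely requires the tree structure of the underlying graph. In a general graph, a proper ancestor of $p$ could in principle hit both sibling subtrees via external edges, and only acyclicity closes off that possibility.
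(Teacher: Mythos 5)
Your proof is correct. Note that the paper does not actually supply a proof of this statement: it is labelled an \emph{Observation} and left unproved, with only a one-line remark that part~(iii) ``follows from the lack of cycles in $G$ and is not true for search trees on graphs in general.'' Your argument for (iii)---locating $u$ strictly above $p$, producing a $w$--$w'$ path inside $G[V(T_p)]$, and closing it through $u$ to get a forbidden cycle---is precisely the reasoning that remark alludes to, and you correctly flag that this is the one place where tree-ness of $G$ is essential. Parts~(i) and~(ii) are the routine unfoldings of the two STT axioms that the paper takes for granted, and your treatment of them is sound, including the small but necessary step in~(ii) that rules out $u$ being a descendant of $w$ (since that would force $u \in V(T_v)$). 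In short, your proposal fills in the details the paper leaves implicit, in the spirit the paper intends.
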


We remark that (\ref{item:boundary-rules:child-itersec}) follows from the lack of cycles in $G$ and is not true for search trees on graphs in general.

\paragraph{Rotations.}
Let $v$ be a node in a search tree $T$ on a graph $G$, and let $p$ be the parent of $v$. A \emph{rotation} of $v$ with its parent (also called a \emph{rotation at $v$}) is performed as follows. Make $p$ a child of $v$ and make $v$ a child of the previous parent of $p$, if it exists (otherwise, make $v$ the root). Then, every child of $v$ with $p \in \bd(T_v)$ is made a child of $p$. Observe that if $G$ is a tree, then only one child of $v$ can change parent like this, otherwise $p$ is part of a cycle. \Cref{fig:stt-rot} shows a rotation in an STT.

\tikzset{
	vertex/.style={fill, circle, inner sep = 1.2pt},
	component/.style={draw, circle, minimum width = 8mm, minimum height = 8mm, inner sep = 0pt},
	empty triangle/.style = {regular polygon, regular polygon sides = 3, inner sep = -1pt, minimum size = 8mm},
	triangle/.style = {empty triangle, draw},
	nice dash/.style={dash pattern=on 1pt off 1pt},
	out dash/.style={nice dash},
	markTri/.style={fill={gray!30!white}}
}

\begin{figure}
	\centering
	\begin{tikzpicture}[sibling distance = 11mm, level distance = 13mm, triangle/.append style={font=\small}]
		\begin{scope}[local bounding box=T1]
			\node[vertex] (p) {}
				child[child anchor = north]{ node[triangle] (A1) {\strut$A_1$} }
				child[child anchor = north]{ node[triangle] (A2) {\strut$A_j$} }
				child{ node[vertex] (c) {}
					child[child anchor = north]{ node[triangle, markTri] (B) {\strut$B$} }
					child[child anchor = north]{ node[triangle] (C1) {\strut$C_1$} }
					child[child anchor = north]{ node[triangle] (C2) {\strut$C_\ell$} }
				};
			\draw[out dash] (p) -- ($(p)+(0,0.3)$);
			\node[right=0.5mm] at (p) {$u$};
			\node[above right] at (c) {$v$};
			\node at ($(A1)!0.5!(A2)+(0,0.2)$) {$\dots$};
			\node at ($(C1)!0.5!(C2)+(0,0.2)$) {$\dots$};
		\end{scope}
		
		\begin{scope}[xshift=60mm,local bounding box=T2]
			\node[vertex] (c) {}
				child{ node[vertex] (p) {}
					child[child anchor = north]{ node[triangle] (A1) {\strut$A_1$} }
					child[child anchor = north]{ node[triangle] (A2) {\strut$A_j$} }
					child[child anchor = north]{ node[triangle, markTri] (B) {\strut$B$} }
				}
				child[child anchor = north]{ node[triangle] (C1) {\strut$C_1$} }
				child[child anchor = north]{ node[triangle] (C2) {\strut$C_\ell$} };
			\draw[out dash] (c) -- ($(c)+(0,0.3)$);
			\node[left=0.5mm] at (c) {$v$};
			\node[above left] at (p) {$u$};
			\node at ($(A1)!0.5!(A2)+(0,0.2)$) {$\dots$};
			\node at ($(C1)!0.5!(C2)+(0,0.2)$) {$\dots$};
		\end{scope}
		
		\draw[-latex] (T1) -- (T2);
	\end{tikzpicture}
	\hspace{10mm}
	\begin{tikzpicture}[xscale=0.7, yscale=0.75]
		\node[component] (A1) at (0,0) {$A_k\strut$};
		\node[] (ARem) at (0,1+0.1) {$\vdots$};
		\node[component] (A2) at (0,2) {$A_1\strut$};
		\node[vertex] (p) at (1,1) {};
		\node[above=1mm] at (p) {$u$};
		\node[component] (B) at (2,1) {$B\strut$};
		\node[vertex] (c) at (3,1) {};
		\node[above=1mm] at (c) {$v$};
		\node[component] (C1) at (4,0) {$C_\ell\strut$};
		\node[] (CRem) at (4,1+0.1) {$\vdots$};
		\node[component] (C2) at (4,2) {$C_1\strut$};
		
		\draw[] (A1) -- (p) -- (B) -- (c) -- (C1);
		\draw[] (A2) -- (p);
		\draw[] (c) -- (C2);
		
		\draw[out dash] (B) -- ++(-75:8mm);
		\draw[out dash] (B) -- ++(-105:8mm);
		
		\draw[out dash] (B) -- ++(75:8mm);
		\draw[out dash] (B) -- ++(105:8mm);
		
		\draw[out dash] (C1) -- ++(-30:8mm);
		\draw[out dash] (C1) -- ++(-60:8mm);
		
		\draw[out dash] (C2) -- ++(30:8mm);
		\draw[out dash] (C2) -- ++(60:8mm);
		
		\draw[out dash] (A1) -- ++(-120:8mm);
		\draw[out dash] (A1) -- ++(-150:8mm);
		
		\draw[out dash] (A2) -- ++(120:8mm);
		\draw[out dash] (A2) -- ++(150:8mm);
	\end{tikzpicture}
	\caption{An STT rotation (left) and the relevant parts of the underlying tree (right).}\label{fig:stt-rot}
\end{figure}

\paragraph{k-cut search trees.}
Let $T$ be a search tree on a graph. A node $v \in V(T)$ is called \emph{$k$-cut} if $|\bd(T_v)| \le k$. In an STT, this means that at most $k$ edges go from $V(T_v)$ to the ancestors of $v$. The search tree $T$ itself is called $k$-cut if every node of $T$ is $k$-cut.

Observe that 1-cut STTs are simply \emph{rootings} of the underlying tree. Our data structures are based on 2-cut STTs. Intuitively, 2-cut STTs ``locally'' behave like BSTs, which allows applying familiar BST techniques. In particular, a node, its parent, and its grandparent in a 2-cut STT must lie on the same path in the underlying tree. Recall that BSTs are search trees on paths, so rotations on these three nodes will behave like BST rotations. This observation is key in Berendsohn and Kozma's Splay generalization \cite{BerendsohnKozma2022a}.
In the following, we prove a slightly stronger property.

\begin{lemma}\label{p:Steiner-closed-three-path}
	Let $v$ be a node in a 2-cut search tree $T$ on a tree $G$. Let $p$ be the parent of $v$ and let $a \in \bd(T_p)$. Then $v, p, a$ must lie on a common path in $G$ (though not necessarily in that order). 
\end{lemma}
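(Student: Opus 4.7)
The plan is to split into two cases according to where $a$'s unique $G$-neighbor $w$ inside $V(T_p)$ lies. Such a $w$ is unique because $G$ is a tree and $G[V(T_p)]$ is connected, so two neighbors of $a$ in $V(T_p)$ would close a cycle.

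\textbf{Case A: $w \in V(T_v)$.} Then $a \in \bd(T_v)$; combined with $p \in \bd(T_v)$ from \cref{p:boundary-rules} and the 2-cut condition, this forces $\bd(T_v) = \{p, a\}$. Let $p'_v, a'_v \in V(T_v)$ be the unique $G$-neighbors of $p$ and $a$ inside $V(T_v)$. I will show $v$ lies on the unique $p$-to-$a$ path $P_{pa}$ in $G$. Deleting either edge $\{p, p'_v\}$ or $\{a, a'_v\}$ from the tree $G$ separates $p$ from $a$, since both $p$ and $a$ attach to the connected subtree $G[V(T_v)]$ solely through these two edges; so both edges appear on $P_{pa}$, giving $P_{pa} = p \to p'_v \to \cdots \to a'_v \to a$ with the middle portion inside $G[V(T_v)]$. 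If $p'_v = v$ or $a'_v = v$ we are done; otherwise $p'_v$ lies in a subtree $T_c$ for some child $c$ of $v$, and $a'_v$ in a subtree $T_{c'}$ for some child $c'$. If $c = c'$, then $v, p, a$ would be three distinct vertices in $\bd(T_c)$, contradicting $|\bd(T_c)| \le 2$. Hence $c \ne c'$, and because the children of $v$ in $T$ correspond by construction to the connected components of $G[V(T_v)] - v$, the middle portion of $P_{pa}$ must cross $v$.

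\textbf{Case B: $w \in V(T_p) \setminus V(T_v)$.} Then $a \notin \bd(T_v)$, and I aim to show $p$ lies on $P_{va}$. As $P_{va}$ goes from $v \in V(T_v)$ to $a \notin V(T_v)$, it exits $V(T_v)$ through some boundary vertex, which must be $p$ or some $q \in \bd(T_v) \setminus \{p\}$ (the exit is not $a$, since $a \notin \bd(T_v)$). If the exit is $p$, we are done. Otherwise, $q \in \bd(T_p)$ by \cref{p:boundary-rules}, so $q \notin V(T_p)$ and $q \ne a$. The suffix of $P_{va}$ then continues from $q$ to $a$; but since $G[V(T_p)]$ is a connected subtree of the tree $G$ with $q$ and $a$ as two distinct external neighbors, the unique $G$-path from $q$ to $a$ must re-enter $V(T_p)$ at $q$'s single $V(T_p)$-neighbor, which is precisely the vertex $q'_v \in V(T_v)$ through which $P_{va}$ just exited. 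This forces $P_{va}$ to revisit $q'_v$, contradicting simplicity, so the exit at $q$ is impossible.

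The main obstacle is the tree-geometry bookkeeping in Case A, in particular pinning down that the middle segment of $P_{pa}$ inside $G[V(T_v)]$ must cross $v$. This is where the 2-cut hypothesis does its essential work, ruling out the $c = c'$ degeneracy; without it, $p$ and $a$ could both attach to the same child subtree and bypass $v$ entirely. Once this point is secured, Case B is routine since it only uses simplicity of paths in a tree together with \cref{p:boundary-rules}.
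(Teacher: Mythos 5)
Your proof is correct, and it takes a genuinely different route from the paper's. The paper argues by contradiction: if $v$, $p$, $a$ were not on a common path, there would be a branching vertex $x$ separating them pairwise; one then shows $x$ must be a proper descendant of $v$ in $T$, and the child subtree $T_c$ of $v$ containing $x$ has all of $v$, $p$, $a$ in $\bd(T_c)$, violating the 2-cut bound. You instead argue directly, without introducing a median vertex, by splitting on whether $a$'s unique $V(T_p)$-neighbor $w$ lies inside $V(T_v)$, and in each case identifying exactly which of $v$, $p$ sits between the other two of $\{v,p,a\}$ on the path ($v$ in Case~A, $p$ in Case~B; note that $a$ itself can never be the middle vertex since the $v$--$p$ path stays inside $V(T_p)$ while $a \notin V(T_p)$, which your case split covers implicitly). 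Both proofs ultimately rely on the 2-cut bound forbidding a third boundary vertex of a subtree, but deploy it at different places: the paper once, on the child $T_c$ chosen via the hypothetical median; you on $T_v$ itself to pin down $\bd(T_v)=\{p,a\}$ in Case~A, and again on a child $T_c$ to exclude the degenerate sub-case of Case~A, while your Case~B uses only tree-connectivity and path-simplicity. Your proof is a bit longer, but it is constructive, explicitly naming the vertex where the three paths meet, which the paper's contradiction argument leaves implicit.
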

\begin{proof}
	Suppose not. Then, there is a node $x \notin \{v,p,a\}$ such that $v, p, a$ are in pairwise distinct components of $G \setminus x$. Clearly, $x$ cannot be an ancestor of $p$, since otherwise $p$ and $v$ would be in different subtrees. Since $v$ is a child of $p$, we know that $x$ must be a descendant of~$v$. Let $c$ be the child of $v$ with $x \in V(T_c)$. We trivially have $v \in \bd(T_c)$. Since there is a path between $x$ and $p$ that does not contain $v$, we also have $p \in \bd(T_c)$. Finally, there is a path between $x$ and $a$ that does not contain $v$ or $p$, implying $a \in \bd(T_c)$. But then $\bdsize{T_c} \ge 3$, violating the 2-cut property.
\end{proof}

\section{Implementing 2-cut STTs}\label{sec:2cut-impl}

Previous works~\cite{BoseCardinalEtAl2020,BerendsohnKozma2022a,BerendsohnEtAl2023a} have not given an explicit implementation of STTs as a data structure. In this section, we show how to efficiently maintain a 2-cut search tree on a fixed tree $G$ under rotations.\footnote{See \path{stt/src/twocut/basic.rs} in the source code.}

Let $T$ be a 2-cut search tree on a tree $G$, and let $v$ be a node in $T$. We call $v$ a \emph{separator node} if $|\bd(T_v)| = 2$. Observe that a separator node $v$ is on the path (in $G$) between the two nodes in $\bd(T_v)$, hence ``separates'' them.

We call a separator node $v$ a \emph{direct separator node} if $\bd(T_v)$ contains precisely the parent and grandparent of $v$ in $T$, and an \emph{indirect separator node} otherwise.

Our representation of 2-cut STTs is based on maintaining separator and indirect separator children of nodes. It turns out that each node can have at most one separator child and one indirect separator child.

\begin{lemma}\label{p:sep-nodes}
	Each node in a 2-cut STT has up to one child that is a direct separator node, and up to one child that is an indirect separator node.
\end{lemma}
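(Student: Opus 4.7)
The plan is to apply Observation~\ref{p:boundary-rules} directly to two separator siblings of $p$, using the size-2 boundary constraint together with part~(iii), which forces their boundaries to intersect only in $\{p\}$.

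Let $v, v'$ be distinct separator children of a node $p$, so $|\bd(T_v)| = |\bd(T_{v'})| = 2$. By part~(i), $p$ lies in both boundaries, so I can write $\bd(T_v) = \{p, a\}$ and $\bd(T_{v'}) = \{p, a'\}$ with $a, a' \neq p$. Part~(iii) then forces $a \neq a'$. This alone rules out two direct separator children: if $v$ and $v'$ were both direct, then $a$ and $a'$ would both equal the grandparent of $p$, a contradiction.

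The indirect case requires one more step. By part~(ii), $a, a' \in \bd(T_p) \cup \{p\}$, hence $a, a' \in \bd(T_p)$. The 2-cut property bounds $|\bd(T_p)| \leq 2$. If $p$ is the root, then $\bd(T_p) = \emptyset$, so no separator children exist at all. Otherwise, let $g$ be the parent of $p$; by part~(i) applied to $p$ as a child of $g$, we have $g \in \bd(T_p)$, and so $\bd(T_p) \setminus \{g\}$ has at most one element. Since $v$ and $v'$ are indirect separators, neither $a$ nor $a'$ equals $g$, forcing $a, a' \in \bd(T_p) \setminus \{g\}$ and hence $a = a'$, contradicting part~(iii).

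No real obstacle is expected here: the claim follows essentially by unpacking the definition of (in)direct separator together with the three facts already collected in Observation~\ref{p:boundary-rules}. The only mildly subtle point is remembering to handle the edge case where $p$ is the root (so has no grandparent and no separator children at all), which I would dispatch by a single sentence before the main case analysis. Lemma~\ref{p:Steiner-closed-three-path} is not needed for this argument.
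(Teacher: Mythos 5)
Your proof is correct and follows essentially the same approach as the paper: both cases are dispatched by combining the 2-cut bound on boundaries with Observation~\ref{p:boundary-rules}(iii), arriving at a contradiction with the disjointness of sibling boundaries outside the parent. The only cosmetic difference is that in the indirect case you phrase the final contradiction as forcing $a = a'$ (violating (iii)) whereas the paper collects $\{p, a_1, a_2\} \subseteq \bd(T_p)$ and contradicts the 2-cut property directly, but the underlying facts used are identical.
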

\begin{proof}
	Suppose a node $u$ has two direct separator children $v, v'$. Then $\bd(T_v) = \bd(T_{v'}) = \{u,p\}$, where $p$ is the parent of $u$. But $\bd(T_v) \cap \bd(T_{v'}) = \{u\}$ by \cref{p:boundary-rules} (\ref{item:boundary-rules:child-itersec}), a contradiction.
	
	Now suppose $u$ has two indirect separator children $v, v'$. Then $u$ has a parent $p$, and there are distinct ancestors $a_1$, $a_2$ of $p$ such that $\bd(T_v) = \{a_1,u\}$ and $\bd(T_{v'}) = \{a_2,u\}$ by \cref{p:boundary-rules} (\ref{item:boundary-rules:child-itersec}). Thus, $\{p,a_1, a_2\} \subseteq \bd(T_u)$ contradicting that $T$ is 2-cut.
\end{proof}

\Cref{p:sep-nodes} suggests the following representation of an STT $T$. For each node $v$, we store the following pointers:
\begin{itemize}
	\item $\tparent(v)$: The parent node of $v$, or $\bot$ if $v$ is the root.
	\item $\tdsepchild(v)$: The unique child of $v$ that is a direct separator node, or $\bot$ if $v$ has no such child.
	\item $\tisepchild(v)$: The unique child of $v$ that is a indirect separator node, or $\bot$ if $v$ has no such child.
\end{itemize}
With these pointers, $T$ uniquely represents $G$ (see \cref{sec:represent-proof}).

It is important to note that not all rotations are actually possible while maintaining the 2-cut property. Berendsohn and Kozma proved the characterization given below.~\cite[Lemma 5.2]{BerendsohnKozma2022a}

\begin{lemma}\label{p:rot-allowed}
	Let $T$ be an STT and let $v \in V(G)$ with parent $p \in V(G)$. Rotating $v$ with $p$ maintains the 2-cut property if and only if $|\bd(T_v)| \neq 1$ or $|\bd(T_p)| \neq 2$.
\end{lemma}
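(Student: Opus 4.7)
The plan is to localise which subtree boundary can actually grow after the rotation of $v$ with $p$, and then to bound that single new boundary via a case analysis on $\bdsize{T_v}$. Using the notation of \cref{fig:stt-rot} ($B$ the unique child of $v$ with $p \in \bd(T_B)$, $C_1,\dots,C_\ell$ the remaining children of $v$, and $A_1,\dots,A_j$ the other children of $p$), one reads off the rotation that $V(T_x)$ is unchanged for every node $x \notin \{v,p\}$, and $V(T_v^{\text{new}}) = V(T_p^{\text{old}})$. Consequently the only subtree whose vertex set shrinks is $T_p$, with $V(T_p^{\text{new}}) = \{p\} \cup \bigcup_j V(T_{A_j}) \cup V(T_B)$; so it suffices to show that $\bdsize{T_p^{\text{new}}} \le 2$ exactly when the stated condition holds.

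As a first reduction I establish $\bd(T_p^{\text{new}}) \subseteq \{v\} \cup \bd(T_p^{\text{old}})$. Any $x \in \bd(T_p^{\text{new}})$ either lies outside $V(T_p^{\text{old}})$---in which case it is adjacent to $V(T_p^{\text{new}}) \subseteq V(T_p^{\text{old}})$ and hence $x \in \bd(T_p^{\text{old}})$---or lies in $V(T_p^{\text{old}}) \setminus V(T_p^{\text{new}}) = \{v\} \cup \bigcup_i V(T_{C_i})$. To rule out $x \in V(T_{C_i})$ I invoke \cref{p:boundary-rules} together with the 2-cut hypothesis: since $v \in \bd(T_{C_i})$, $p \notin \bd(T_{C_i})$ (by the choice of $B$), and $\bdsize{T_{C_i}} \le 2$, the neighbours of $V(T_{C_i})$ outside itself lie in $\{v\} \cup \{\text{ancestors of }p\}$, which is disjoint from $V(T_p^{\text{new}})$. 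Combined with $v \in \bd(T_p^{\text{new}})$ (\cref{p:boundary-rules}\,(i) applied to the new tree), this gives $\bdsize{T_p^{\text{new}}} \le 1 + \bdsize{T_p^{\text{old}}}$.

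To pinpoint the only obstruction I now split on $\bdsize{T_v^{\text{old}}}$. If $\bdsize{T_p^{\text{old}}} \le 1$ the previous bound already yields $\bdsize{T_p^{\text{new}}} \le 2$, so I may assume $\bdsize{T_p^{\text{old}}} = 2$. In the sub-case $\bdsize{T_v^{\text{old}}} = 1$ (so $\bd(T_v^{\text{old}}) = \{p\}$), no element of $\bd(T_p^{\text{old}})$ is dropped: using that $G$ is a tree and $V(T_p^{\text{old}})$ is connected, every $x \in \bd(T_p^{\text{old}})$ has a unique neighbour in $V(T_p^{\text{old}})$, and that neighbour cannot lie in $V(T_v^{\text{old}})$ (else $x \in \bd(T_v^{\text{old}}) = \{p\}$, impossible), so it lies in $V(T_p^{\text{new}})$ and $x$ survives. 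Hence $\bdsize{T_p^{\text{new}}} = 3$, matching the forbidden case of the lemma. In the sub-case $\bdsize{T_v^{\text{old}}} = 2$, write $\bd(T_v^{\text{old}}) = \{p,a\}$ and argue that the ancestor $a$ drops out: the unique neighbour $z$ of $a$ in $V(T_p^{\text{old}})$ lies in $V(T_v^{\text{old}})$ (since $a \in \bd(T_v^{\text{old}})$); if further $z \in V(T_p^{\text{new}})$, then $z \in V(T_v^{\text{old}}) \cap V(T_p^{\text{new}}) = V(T_B)$, forcing $a \in \bd(T_B) = \{v,p\}$ (from $v,p \in \bd(T_B)$ and $\bdsize{T_B} \le 2$), contradicting that $a$ is a proper ancestor of $p$. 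Thus $a \notin \bd(T_p^{\text{new}})$ and $\bdsize{T_p^{\text{new}}} \le 2$, matching the allowed case.

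The main obstacle is the $\bdsize{T_v^{\text{old}}} = 2$ sub-case above: the argument that $a$ must drop out genuinely needs both tree-uniqueness of neighbours in the connected set $V(T_p^{\text{old}})$ (this is where $G$ being a tree is used) and the 2-cut bound on $T_B$ forcing $\bd(T_B) = \{v,p\}$. Without either ingredient, $a$ could conceivably remain in $\bd(T_p^{\text{new}})$ and inflate the boundary to size~$3$, so establishing its removal is the heart of the proof.
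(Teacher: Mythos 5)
The paper does not prove this lemma itself; it cites it as~\cite[Lemma 5.2]{BerendsohnKozma2022a}, so there is no in-paper proof to compare against. Your proof is essentially correct and well-organized: the localization step (only $\bd(T_p)$ can grow, since $V(T_v^{\text{new}})=V(T_p^{\text{old}})$ and every other subtree's vertex set is preserved), the containment $\bd(T_p^{\text{new}})\subseteq\{v\}\cup\bd(T_p^{\text{old}})$, and the case split on $\bdsize{T_v}$ with the tree-uniqueness-of-neighbour and $\bd(T_B)=\{v,p\}$ arguments all hold up.

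One small point to make explicit: in the sub-case $\bdsize{T_v^{\text{old}}}=2$, concluding $\bdsize{T_p^{\text{new}}}\le 2$ from ``$a\notin\bd(T_p^{\text{new}})$'' requires knowing that $a$ actually lies in the ambient size-3 set $\{v\}\cup\bd(T_p^{\text{old}})$, i.e.\ that $a\in\bd(T_p^{\text{old}})$. This follows immediately from \cref{p:boundary-rules}(ii) ($\bd(T_v)\subseteq\bd(T_p)\cup\{p\}$ together with $a\neq p$), and you cite this observation elsewhere, but it should be stated at this point so the reader sees why excluding $a$ drops the bound from~3 to~2. With that one line added, the argument is complete and self-contained.
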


Rotations satisfying the requirements of \cref{p:rot-allowed} can be implemented in constant time using the above data structure. This (rather technical) procedure is found in \cref{sec:rot-impl}.

\section{Linking and cutting}\label{sec:link-cut}

In this section, we show how to implement the operations \opLink{} and \opCut{} to add and remove edges.\footnote{See \path{stt/src/twocut/mod.rs}} The underlying forest $G$ is maintained as a collection of 2-cut STTs, which we call a \emph{search forest}.
For a node $v$ in a search forest $F$, we denote by $F_v$ the subtree rooted at $v$.
Since we do not allow adding and removing nodes, we maintain all nodes in a fixed-size array. The structure of each STT is represented by the node pointers described in \cref{sec:2cut-impl}.

We assume that we have a black-box algorithm \tNodeToRoot{} that, given a node, brings it to the top of its tree with some sequence of rotations. (Implementations are presented in \cref{sec:node-to-root}.) We additionally assume that \tNodeToRoot{} is \emph{stable}, which essentially means that a call to \tNodeToRoot{} does not move around the previous root too much. Formally, an algorithm for \tNodeToRoot{} is called stable if, in the resulting search tree, the depth of the previous root $r$ is bounded by some constant, and all nodes on the root path of $r$ are 1-cut. The stability property simplifies the implementation of \opCut{} and \opPathWeight{}, but is not strictly necessary (see \cref{sec:non-stable}).

\begin{figure}
	\begin{minipage}{.5\columnwidth}
		\begin{algorithmic}
			\Procedure{Link}{$u,v,w$}
			\LComment{Assume $\{u,v\} \notin E(G)$}
			\State $\tNodeToRoot{u}$
			\State $\tNodeToRoot{v}$
			\State $\tparent(u) \gets v$
			\EndProcedure
		\end{algorithmic}
	\end{minipage}%
	\begin{minipage}{.5\columnwidth}
		\begin{algorithmic}
			\Procedure{Cut}{$u,v$}
			\LComment{Assume $\{u,v\} \in E(G)$}
			\State $\tNodeToRoot{u}$
			\State $\tNodeToRoot{v}$
			\State $\tparent(u) \gets \bot$
			\EndProcedure
		\end{algorithmic}
	\end{minipage}
	\caption{Pseudocode for $\opLink{}$ and $\opCut{}$.}\label{alg:link-cut}
\end{figure}

The implementations of $\opLink{}$ and $\opCut{}$ are shown in \cref{alg:link-cut}. Note that we ignore the supplied weight $w$ in $\opLink{}$ for now.

We now argue the correctness of the two procedures. Below, $G$ and $G'$ denote the underlying forest before and after the operation.
\begin{itemize}
	\item Consider a call $\opLink{u,v,w}$.
	Let $F$ be the search forest after the two calls to $\tNodeToRoot{}$, and let $F'$ the search forest after $\opLink{}$. If we only consider parent pointers, then $F'$ is clearly a valid search forest on $G'$. It remains to show that child pointers are still valid. For this, observe that for every node $x \in V(F) \setminus \{u\}$, we have $\bd(F_x) = \bd(F'_x)$, and we have $\bd(F_u) = \emptyset$, $\bd(F'_u) = \{v\}$. Thus, no node becomes a separator child or stops being one, and no direct separator node becomes an indirect one or vice versa. Thus, all child pointers stay valid.
	
	Observe that the call $\tNodeToRoot{v}$ is not necessary for correctness. However, it is important for the complexity analysis in \cref{sec:splay-analysis}.
	
	\item Now consider a call $\opCut{u,v}$.
	Again, let $F$ be the search forest after the two calls to $\tNodeToRoot{}$, and let $F'$ be the search forest after $\opCut{}$. Stability implies that $u$ is 1-cut in $F$. Since $v$ is an ancestor of $u$ and $\{u,v\} \in E(G)$, we have $\bd(F_u) = \{v\}$, implying that $v$ is the parent of $u$. Thus, setting $\tparent(u) \gets \bot$ removes the edge $\{u,v\}$ from the underlying forest. Again, the boundaries of subtrees other than $T_u$ do not change between $F$ and $F'$, implying that no further pointer changes are necessary to make $F'$ valid.
\end{itemize}

The running time of both operations is dominated by the calls to \tNodeToRoot{}, which we later show have amortized complexity $\fO(\log n)$ for our SplayTT-based variants (see \cref{sec:splay-analysis}), where $n$ is the number of vertices in the underlying forest.

\paragraph{Non-stable implementations.}\label{sec:non-stable}
Our Rust implementation supports non-stable implementations of \tNodeToRoot{}; although in this case, a second procedure $\tNodeBelowRoot{v}$ is required, which rotates $v$ directly below the current root. The implementations of \opCut{} (\cref{alg:link-cut}) and \opPathWeight{} (see \cref{sec:path-weight}) are easily adapted.

\section{Maintaining edge weights}\label{sec:edge-weights}

In this section, we show how to maintain edge weights in 2-cut STTs under rotations and implement the \opPathWeight{} operation.\footnote{In the source code, the weight update procedures are found in \path{stt/src/twocut/node_data.rs}; implementations of \opPathWeight{} (for stable and non-stable \tNodeToRoot{}) are found in \path{stt/src/twocut/mod.rs}} For simplicity, we assume that the edge weights come from a \emph{group} here instead of a monoid. A somewhat more complicated way to handle monoids is shown in \cref{sec:monoid-weights}.

Let $F$ be a 2-cut search forest on a forest $G$ with edge weights from a commutative group $(W,+)$. The weight of a path in $F$ is defined as the sum of the weights of its edges. For two vertices $u, v \in V(G)$ in the same tree, let $d(u,v)$ denote the weight of the unique path between $u$ and~$v$, i.e., the \emph{distance} between $u$ and $v$.

For each node $v$, we store a field $\tpdist(v)$ indicating the distance between $v$ and the parent of $v$ in $F$. If $v$ is the root, then $\tpdist(v) = \infty$.

\paragraph{Rotations.} Consider a rotation of $v$ with its parent $p$. Let $c$ be the direct separator child of $v$, or $\bot$ if $v$ has no direct separator child. Let $g$ be the parent of $p$, or $\bot$ if $p$ is the root. Let $F$, $F'$ be the search forest before and after the rotation. We denote by $\tpdist(\cdot)$ and $\tpdist'(\cdot)$ the values before and after the rotation.
\begin{itemize}
	\item In $F'$, the parent of $p$ is $v$, so $\tpdist'(p) = d(p,v) = \tpdist(v)$.
	\item If $p$ is the root of $F$, then $v$ is the root of $F'$, so $\tpdist'(v) = \infty$.
	\item If $p$ is not the root, then $g$ exists, and $\tpdist'(v) = d(v,g)$. Since $F$ is 2-cut, $v, p, g$ lie on a common path (by \cref{p:Steiner-closed-three-path}).
	\begin{itemize}
		\item If $v$ is between $p$ and $g$ on the path, then $v$ is a direct separator (by definition). We have $\tpdist'(v) = d(v,g) = d(p,g) - d(p,v) = \tpdist(p) - \tpdist(v)$.
		\item If $p$ is between $v$ and $g$ on the path, then $v$ is 1-cut or an indirect separator. We have $\tpdist'(v) = d(v,g) = d(v,p) + d(p,g) = \tpdist(v) + \tpdist(p)$.
		\item $g$ cannot be between $v$ and $p$, since then $v$ and $p$ needed to be in different subtrees of~$g$.
	\end{itemize}
	\item Suppose $c$ exists. Since $c$ is a direct separator in $T$, $c$ lies on the path between $v$ and $p$, so we have $\tpdist'(c) = d(c,p) = d(v,p) - d(v,c) = \tpdist(v) - \tpdist(c)$.
\end{itemize}

Since only the parents of $v$, $p$, and (possibly) $c$ change, an update procedure for $\tpdist(\cdot)$ after a rotation follows from the observations above.

\paragraph{Linking and cutting.} Aside from rotations, at the end of $\opLink{u,v,w}$, we make $v$ the parent of $u$. We can simply set $\tpdist(u) \gets w$ here. Similarly, at the end of $\opCut{u,v}$, the node $u$ is removed from its parent, and we set $\tpdist(u) \gets \infty$.

\paragraph{Computing path weight.}\label{sec:path-weight} Finally, we implement $\opPathWeight{u,v}$ as follows. First, we call $\tNodeToRoot{u}$, and then $\tNodeToRoot{v}$. Afterwards, we follow parent pointers to check whether $v$ is the root of the search tree containing $u$. If no, we return~$\bot$. If yes, we return the sum $\sum_{x \in P \setminus \{v\}} \tpdist(x)$, where $P$ is the root path of $u$.

We now argue that this procedure is correct. Let $F$ be the search forest after the two calls to \tNodeToRoot{}. Clearly, $v$ is the root of its search tree in $F$. If $u$ is in a different search tree, the algorithm correctly returns $\bot$.

Otherwise, $u$ is a descendant of $v$. Let $u = u_1, u_2, \dots, u_k = v$ be the root path of $u$ in $F$. Stability of $\tNodeToRoot{}$ implies that $u_1, u_2, \dots, u_{k-1}$ are all 1-cut. This means that for each $i \in [k-2]$, the path from any node in $V(T_{u_i})$ to any node outside of $V(T_{u_i})$ must contain $u_{i+1}$. In particular, the path from $u$ to $u_{i+2}$ contains $u_{i+1}$. Thus, by induction, the path from $u$ to $v = u_k$ contains $u_1, u_2, \dots, u_k$, in that order, and its weight is $\sum_{i=1}^{k-1} d(u_i, u_{i+1}) = \sum_{i=1}^{k-1} \tpdist(u_i)$.

The running time of \opPathWeight{} is dominated by the calls to \tNodeToRoot{}, since stability of \tNodeToRoot{} implies that $k$ is bounded.

\section{Heuristics for NodeToRoot}\label{sec:node-to-root}

\newcommand{\tsplayTo}{\mathtt{splay\_to}}
\newcommand{\tcanRotate}{\mathtt{can\_rotate}}
\newcommand{\trotate}{\mathtt{rotate}}
\newcommand{\tcanSplayStep}{\mathtt{can\_splay\_step}}
\newcommand{\tisSep}{\mathtt{is\_separator}}

\newcommand{\algMRT}{MoveToRootTT}
\newcommand{\algGSplay}{GreedySplayTT}

In the following sections, we describe multiple implementations of the \tNodeToRoot{} procedure used by \opLink{}, \opCut{}, and \opPathWeight{}.

\subsection{MoveToRootTT}\label{sec:mtr}
One of the simplest dynamic BST algorithms is the \emph{move-to-root} heuristic ~\cite{AllenMunro1978}. After finding a node $v$, it simply rotates $v$ with its parent until $v$ becomes the root.

This algorithm does not work for STTs, since not all rotations are allowed. However, if a rotation of $v$ with its parent $p$ is not allowed, then $|\bd(T_v)| = 1$ and $|\bd(T_p)| = 2$ by \cref{p:rot-allowed}, implying that $p$ is not the root and, in particular, $p$ can be rotated with its parent. Thus, we can bring $v$ to the root by repeatedly rotating at $v$ or, if that is not possible, rotating at its parent, until $v$ is the root. We call this algorithm \algMRT{} (see \cref{alg:mtr}).\footnote{Found in \path{stt/src/twocut/mtrtt.rs} in the source code.}

Observe that if the underlying tree $G$ is a path, then all rotations are possible. Thus, in this case, \algMRT{} is equivalent to the classical move-to-root algorithm. It is known that move-to-root performs poorly in the worst case, but well on uniformly random inputs \cite{AllenMunro1978}. Our experiments suggest the same for \algMRT{}. In fact, probably due to its simplicity, it outperforms more complicated algorithms on uniformly random inputs.

\begin{figure}
	\begin{minipage}{.5\textwidth}
		\begin{algorithmic}
			\Procedure{NodeToRootMTR}{$v$}
				\While{$v$ has parent $p$}
					\If{$\tcanRotate(v)$}
						\State $\trotate(v)$
					\Else
						\State $\trotate(p$)
					\EndIf
				\EndWhile
			\EndProcedure
		\end{algorithmic}
		\captionof{figure}{The \algMRT{} implementation of \tNodeToRoot{}.}\label{alg:mtr}
	\end{minipage}%
	\begin{minipage}{.5\textwidth}
		\begin{algorithmic}
			\Procedure{$\tsplayStep$}{$v$}
				\If{$v$ has no grandparent}
					\State $\trotate(v)$
				\Else
					\If{$v$ is a separator}
						\LComment{ZIG-ZAG}
						\State $\trotate(v)$
						\State $\trotate(v)$
					\Else
						\LComment{ZIG-ZIG}
						\State $\trotate(\tparent(v))$
						\State $\trotate(v)$
					\EndIf
				\EndIf
			\EndProcedure
		\end{algorithmic}
		\captionof{figure}{The $\tsplayStep$ procedure.}\label{alg:splay_step}
	\end{minipage}
\end{figure}

\subsection{SplayTT}\label{sec:splay}
In this section, we present the Splay\-TT algorithm of Berendsohn and Kozma~\cite{BerendsohnKozma2022a} and two simple variants. It is based on the classical Splay algorithm~\cite{SleatorTarjan1985}, which we describe first.

Splay can be seen as a slightly more sophisticated version of the move-to-root algorithm. After finding a node $v$, it is brought to the root by a series of calls to the procedure $\tsplayStep(v)$. If $v$ has no grandparent, then $\tsplayStep(v)$ simply rotates $v$ with its parent $p$ (this is called a ZIG step). If the value of $v$ is between the values of $p$ and its grandparent $g$, then $\tsplayStep(v)$ rotates twice at $v$ (ZIG-ZAG step). Finally, if the value of $v$ is smaller or larger than both values of $p$ and $g$, then $\tsplayStep(v)$ rotates first at $p$ and then at $v$ (ZIG-ZIG step). Afterwards, $v$ is an ancestor of both $p$ and $g$, so $v$ is eventually brought to the root.

In 2-cut STTs, $\tsplayStep(v)$ can be applied basically as-is, since \cref{p:Steiner-closed-three-path} implies that $v$, $p$, and $g$ are on a path. If $v$ is between $p$ and $g$, then we execute a ZIG-ZAG step; otherwise, we execute a ZIG-ZIG step (see \cref{alg:splay_step}). However, simply applying $\tsplayStep$ repeatedly may destroy the 2-cut property. The following lemma characterizes situations where $\tsplayStep(v)$ is allowed.

\begin{restatable}{lemma}{pSplayStepAllowed}\label{p:splay-step-allowed}
	Let $v$ be a node in an STT $T$. If $v$ is a child of the root of $T$, then $\tsplayStep(v)$ preserves the 2-cut property.
	If $v$ has a parent $p$ and a grandparent $g$, then $\tsplayStep(v)$ preserves the 2-cut property if and only if $g$ is not a separator or both $v$ and $p$ are separators.
\end{restatable}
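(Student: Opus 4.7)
The plan is to reduce the statement to Lemma~\ref{p:rot-allowed} applied to each of the one or two rotations performed by $\tsplayStep(v)$, carefully tracking how boundary sizes change through a rotation. The key bookkeeping observation is: if $x$ is rotated with its parent $y$, then in the new tree the subtree rooted at $x$ has the same vertex set as the old $T_y$, while the subtree rooted at $y$ has the same vertex set as the old $T_x$ plus possibly one moved sub-subtree minus $x$ itself; we will only need the first fact, giving $\bdsize{T'_x} = \bdsize{T_y}$.

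First I would dispose of the case where $v$ is a child of the root $p$: then $\tsplayStep(v)$ is a single rotation, $\bdsize{T_p} = 0 \ne 2$, so by Lemma~\ref{p:rot-allowed} the rotation is allowed. Next I would split the main case ($v$ has grandparent $g$) according to whether $v$ is a separator, which decides ZIG-ZAG versus ZIG-ZIG.

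For the ZIG-ZAG case, $v$ is a separator, so $\bdsize{T_v} = 2$ and the first rotation of $v$ with $p$ is allowed by Lemma~\ref{p:rot-allowed}. After it, $v$ is a child of $g$ with $\bdsize{T'_v} = \bdsize{T_p}$, while $g$'s subtree is unchanged so $\bdsize{T'_g} = \bdsize{T_g}$. Applying Lemma~\ref{p:rot-allowed} again, the second rotation at $v$ is allowed iff $\bdsize{T_p} \ne 1$ or $\bdsize{T_g} \ne 2$, i.e., iff $p$ is a separator or $g$ is not one. Since $v$ is already a separator, this matches the stated criterion in this subcase. For the ZIG-ZIG case, $v$ is $1$-cut with $\bd(T_v) = \{p\}$. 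I would first verify that after rotating $p$ with $g$, $v$ remains a child of $p$: the rotation reassigns to $g$ only those children of $p$ whose boundary contains $g$, and $\bd(T_v) = \{p\}$ does not. The first rotation is allowed iff $\bdsize{T_p} \ne 1$ or $\bdsize{T_g} \ne 2$. After it, $\bdsize{T'_p} = \bdsize{T_g}$ while $T_v$ is unchanged, so the second rotation at $v$ is allowed iff $\bdsize{T_g} \ne 2$, i.e., $g$ is not a separator; this condition already implies the first, so the whole step is allowed iff $g$ is not a separator.

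Finally I would combine the two subcases: when $v$ is a separator the step is allowed iff $p$ is a separator or $g$ is not, and when $v$ is not a separator the step is allowed iff $g$ is not a separator. A one-line Boolean check shows both disjunctions collapse into ``$g$ is not a separator, or both $v$ and $p$ are separators,'' which is the claimed characterization. The main potential obstacle is the boundary bookkeeping during the first ZIG-ZIG rotation, specifically being careful that $v$ stays attached to $p$ and that the updated boundaries read off cleanly; everything else is a direct invocation of Lemma~\ref{p:rot-allowed}.
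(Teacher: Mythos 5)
Your proof is correct and takes essentially the same approach as the paper: split on whether $v$ is a child of the root, and otherwise on ZIG-ZAG versus ZIG-ZIG, then apply Lemma~\ref{p:rot-allowed} to each constituent rotation while tracking boundary sizes via the fact that after rotating $x$ with its parent $y$, $V(T'_x) = V(T_y)$. Your write-up is slightly more careful in one spot—it explicitly verifies that $v$ stays a child of $p$ after the first ZIG-ZIG rotation because $g \notin \bd(T_v)$—a point the paper uses implicitly when asserting $|\bd(T'_v)| = |\bd(T_v)|$.
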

\begin{proof}
	If $v$ is the child of the root $r$, then $\tsplayStep(v)$ performs a single rotation, which is valid (i.e., preserves the 2-cut property), since $|\bd(T_r)| = 0$.
	
	Suppose $v$ is note the child of the root, and we apply $\tsplayStep(v)$. Let $T'$ be the search tree after the first rotation, and $T''$ be the search tree after the second rotation.
	
	If $\tsplayStep(v)$ executes a ZIG-ZIG step, then it first rotates $p$ with $g$, and then $v$ with $p$. The first rotation is invalid iff $|\bd(T_p)| = 1$ and $|\bd(T_g)| = 2$. The second rotation is invalid iff $|\bd(T'_v)| = |\bd(T_v)| = 1$ and $|\bd(T'_p)| = |\bd(T_g)| = 2$. So the ZIG-ZIG step is invalid if $g$ is a separator and at least one of $v$ and $p$ is not. This is precisely the negation of the condition stated in the lemma.
	
	If $\tsplayStep(v)$ executes a ZIG-ZAG step, then it rotates twice at $v$. The first rotation is invalid iff $|\bd(T_v)| = 1$ and $|\bd(T_p)| = 2$. This can never happen, since we only execute a ZIG-ZAG step if $v$ is a separator. The second rotation is invalid iff $|\bd(T'_v)| = |\bd(T_p)| = 1$ and $|\bd(T'_g)| = |\bd(T_g)| = 2$. Assuming that $v$ is a separator, this is again the negation of the stated condition.
\end{proof}

\newcommand{\dsep}{d}
\newcommand{\ndsep}{$\neg\mathrm{d}$}
\newcommand{\isep}{i}
\newcommand{\nisep}{$\neg\mathrm{i}$}
\newcommand{\sep}{s}
\newcommand{\nsep}{$\neg\mathrm{s}$}
\newcommand{\rotlabel}[1]{\scriptsize($#1$)}
\tikzset{
	mainnode/.style={},
	labell/.style={left=-.5mm},
	labelr/.style={right=-.5mm},
	labelal/.style={left=-.5mm},
	labelar/.style={right=-.5mm},
	labelall/.style={above left=-1mm},
	labelarr/.style={above right=-1mm}
}

\begin{figure}
	\begin{subfigure}{.53\textwidth}
		\centering
		\fbox{
		\begin{tikzpicture}[x=3.5mm, y=8mm]
			\begin{scope}[local bounding box=zigzag1]
				\node (g) at (0,0) {$g$};
				\node (p) at (0,-1) {$p$}; \draw (g) -- (p);
				\node[mainnode] (v) at (0,-2) {$v$}; \draw (p) -- node[labelr] {\tiny \dsep} (v);
				\node (x) at (-2,-3) {$x$}; \draw (v) -- node[labelall] {\tiny \nsep} (x);
				\node (y) at (0,-3) {$y$}; \draw (v) -- node[shift={(0.35,-0.1)}] {\tiny \dsep} (y);
				\node (z) at (2,-3) {$z$}; \draw (v) -- node[labelarr] {\tiny \isep} (z);
			\end{scope}
			\begin{scope}[xshift=25mm, local bounding box=zigzag2]
				\node (g) at (0,0) {$g$};
				\node[mainnode] (v) at (0,-1) {$v$}; \draw (g) -- (v);
				\node (x) at (-2,-2) {$x$}; \draw (v) -- node[labelall] {\tiny \nsep} (x);
				\node (p) at (0,-2) {$p$}; \draw (v) -- (p);
				\node (z) at (2,-2) {$z$}; \draw (v) -- node[labelarr] {\tiny \dsep} (z);
				\node (y) at (0,-3) {$y$}; \draw (p) -- node[labelr] {\tiny \dsep} (y);
			\end{scope}
			\begin{scope}[xshift=50mm, local bounding box=zigzag3]
				\node[mainnode] (v) at (0,0) {$v$};
				\node (x) at (-2,-1) {$x$}; \draw (v) -- node[labelall] {\tiny \nsep} (x);
				\node (p) at (0,-1) {$p$}; \draw (v) -- (p);
				\node (g) at (2,-1) {$g$}; \draw (v) -- (g);
				\node (y) at (0,-2) {$y$}; \draw (p) -- node[labelr] {\tiny \dsep} (y);
				\node (z) at (2,-2) {$z$}; \draw (g) -- node[labelr] {\tiny \dsep} (z);
			\end{scope}
			\draw[-latex] ($(zigzag1.north east)!(zigzag2)!(zigzag1.south east)$) -- node[above] {\rotlabel{v}} (zigzag2);
			\draw[-latex] (zigzag2) -- node[above] {\rotlabel{v}} ($(zigzag3.north west)!(zigzag2)!(zigzag3.south west)$);
		\end{tikzpicture}
		}
		\caption{ZIG-ZAG (rotate twice at $v$)}\label{fig:zig-zag}
	\end{subfigure}%
	\begin{subfigure}{.47\textwidth}
		\centering
		\fbox{
		\begin{tikzpicture}[x=4mm, y=8mm]
			\begin{scope}[local bounding box=zigzig1]
				\node (g) at (0,0) {$g$};
				\node (p) at (0,-1) {$p$};
				\node[mainnode] (v) at (0,-2) {$v$};
				\node (x) at (-1,-3) {$x$};
				\node (y) at (1,-3) {$y$};
				\draw (g) -- (p) -- node[labelr] {\tiny \ndsep} (v)
					(v) -- node[labelal] {\tiny \ndsep} (x)
					(v) -- node[labelar] {\tiny \dsep} (y);
			\end{scope}
			\begin{scope}[local bounding box=zigzig2, xshift=21mm]
				\node (p) at (0,0) {$p$};
				\node[mainnode] (v) at (0,-1) {$v$}; \draw (p) -- (v);
				\node (g) at (2,-1) {$g$}; \draw (p) -- (g);
				\node (x) at (-1,-2) {$x$}; \draw (v) -- node[labelal] {\tiny \ndsep} (x);
				\node (y) at (1,-2) {$y$}; \draw (v) -- node[labelar] {\tiny \dsep} (y);
			\end{scope}
			\begin{scope}[local bounding box=zigzig3, xshift=44mm]
				\node[mainnode] (v) at (0,0) {$v$};
				\node (x) at (-1,-1) {$x$}; \draw (v) -- (x);
				\node (p) at (1,-1) {$p$}; \draw (v) -- (p);
				\node (y) at (0,-2) {$y$}; \draw (p) -- node[labelal] {\tiny \dsep} (y);
				\node (g) at (2,-2) {$g$}; \draw (p) -- (g);
			\end{scope}
			\draw[-latex] (zigzig1) -- node[above] {\rotlabel{p}} ($(zigzig2.north west)!(zigzig1)!(zigzig2.south west)$);
			\draw[-latex] ($(zigzig2.north east)!(zigzig1)!(zigzig2.south east)$) -- node[above] {\rotlabel{v}} ($(zigzig3.north west)!(zigzig1)!(zigzig3.south west)$);
		\end{tikzpicture}
		}
		\caption{ZIG-ZIG (rotate at $p$, then $v$)}\label{fig:zig-zig}
	\end{subfigure}
	\caption{Sketches of the two cases in a $\tsplayStep(v)$, including the behavior of all possible types of children of $v$. A small ``\sep'' (resp.\ ``\dsep'', ``\isep'') indicates an edge to a separator (resp.\ direct/indirect separator) child, and ``\nsep'' (resp.\ ``\ndsep'',  ``\nisep'') indicates the child cannot have the respective property.}\label{fig:splay-step}
\end{figure}

\Cref{fig:splay-step} sketches a ZIG-ZAG, resp.\ ZIG-ZIG step. In the figure, $x$, $y$, and $z$ represent different types of children that may or may not exist (and there may even be multiple non-separator children like $x$). In the first tree in \cref{fig:zig-zag}, observe that $\bd(T_z) \subseteq \{v\} \cup \bd(T_v)$, so $\bd(T_z) = \{v, g\}$. The remainder of the two sketches is easily explained from the definition of rotations; in particular, when rotating a node $v$ with its parent $p$, the direct separator child of $v$ becomes a child of $p$, and all other children of $v$ and $p$ are not affected.

We now present three Splay-based algorithms using the $\tsplayStep(v)$ procedure to bring a node to the root.\footnote{All three variants are found in \path{stt/src/twocut/splaytt.rs} in the source code.}

\paragraph{Greedy SplayTT.} Our first algorithm is similar to \algMRT{} and is inspired by the top tree implementation of Holm, Rotenberg, and Ryhl~\cite{HolmEtAl2023}. Greedy SplayTT brings a node $v$ to the root by repeatedly trying executing $\tsplayStep$ on $v$, its parent, and its grandparent.
See \cref{alg:greedy} for pseudocode.

\begin{figure}
	\centering
	\begin{minipage}{.7\textwidth}
		\begin{algorithmic}
			\Procedure{NodeToRootGreedySplay}{$v$}
				\While{$v$ has parent $p$}
					\If{$p$ has parent $g$}
						\If{$\tcanSplayStep(v)$}
							\State $\tsplayStep(v)$
						\ElsIf{$\tcanSplayStep(p)$}
							\State $\tsplayStep(p)$
						\Else 
							\State $\tsplayStep(g)$ \Comment{Must be possible}
						\EndIf
					\Else \Comment{$p$ is the root}
						\State $\trotate(v)$
					\EndIf
				\EndWhile
			\EndProcedure
		\end{algorithmic}
	\end{minipage}
	\caption{The Greedy SplayTT implementation of \tNodeToRoot{}.}\label{alg:greedy}
\end{figure}

The following lemma implies that Greedy SplayTT only performs valid rotations.

\begin{lemma}
	Let $v$ be a node in a 2-cut STT $T$ with parent $p$ and grandparent $g$. Then one of $\tsplayStep(v)$, $\tsplayStep(p)$, and $\tsplayStep(g)$ can be executed.
\end{lemma}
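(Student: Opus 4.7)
The plan is to invoke \cref{p:splay-step-allowed} on each of $v$, $p$, $g$ in turn and do a small case analysis on how high up $g$ sits in the tree, since the statement of the lemma treats the ``child of the root'' case separately from the ``has grandparent'' case.

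First I would dispose of the easy cases. If $g$ is the root of $T$, then $p$ is a child of the root, so the first part of \cref{p:splay-step-allowed} immediately gives that $\tsplayStep(p)$ preserves the 2-cut property. If $g$ has a parent but no grandparent, then $g$ itself is a child of the root and $\tsplayStep(g)$ is allowed by the same clause.

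The interesting case is when $g$ has a parent $a$ and a grandparent $b$. Here I would proceed by contradiction: assume none of $\tsplayStep(v), \tsplayStep(p), \tsplayStep(g)$ preserves the 2-cut property. Applying the second clause of \cref{p:splay-step-allowed} to each of these three nodes yields, respectively:
\begin{enumerate}[(i)]
    \item $g$ is a separator, and at least one of $v, p$ is not;
    \item $a$ is a separator, and at least one of $p, g$ is not;
    \item $b$ is a separator, and at least one of $g, a$ is not.
\end{enumerate}
From (i), $g$ is a separator. Combined with (ii), this forces $p$ to not be a separator, while still giving that $a$ is a separator. But then both $g$ and $a$ are separators, directly contradicting (iii). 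Hence at least one of the three splay steps must be admissible.

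I do not expect any real obstacle here: the hardest part is just keeping the roles of $v, p, g, a, b$ straight when quoting \cref{p:splay-step-allowed} with each of $v, p, g$ playing the role of the splayed node, and remembering to split off the cases where $g$ lacks a parent or grandparent (so that the ``separator'' conditions on $a$ or $b$ do not even make sense). No use of \cref{p:Steiner-closed-three-path} or of the explicit geometry of rotations is needed; the characterization from \cref{p:splay-step-allowed} already encodes everything.
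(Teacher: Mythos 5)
Your proposal is correct and takes essentially the same approach as the paper: assume for contradiction that none of the three splay steps is admissible, apply \cref{p:splay-step-allowed} to each of $v$, $p$, $g$, and derive that $g$ and its parent are both separators, which contradicts $\tsplayStep(g)$ being disallowed. The only cosmetic difference is in the edge cases: the paper observes that if $g$ lacks a parent or grandparent then $g$ cannot be a separator so $\tsplayStep(v)$ is allowed, whereas you invoke the ``child of the root'' clause on $p$ or $g$ directly; both are fine.
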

\begin{proof}
	Suppose all three calls are invalid. First, observe that then $g$ must have a parent $h$ and a grandparent~$i$; otherwise, $g$ is not a separator, so $\tsplayStep(v)$ is allowed. Now suppose $\tsplayStep(v)$, $\tsplayStep(p)$, $\tsplayStep(g)$ are all disallowed. Then $g$, $h$, and $i$ must be separators. But then $\tsplayStep(g)$ is allowed, a contradiction.
\end{proof}

It remains to show that Greedy SplayTT actually brings the given node to the root. For this, observe that performing $\tsplayStep(x)$ for some node $x$ decreases the depth of each child of $x$ by at least one (see \cref{fig:splay-step}). Hence, the depth of each \emph{descendant} of $x$ is also decreased. Each $\tsplayStep$ in $\Call{NodeToRootGreedySplay}{v}$ thus decreases the depth of $v$, eventually bringing it to the root.

\paragraph{Two-pass SplayTT.} We now describe the algorithm of Berendsohn and Kozma~\cite{BerendsohnKozma2022a}. Suppose we want to rotate $v$ to the root. The idea is to first do one pass over the root path of $v$ and remove all nodes that might inhibit rotations. Then we splay $v$ to the root. Notably, we do not use \cref{p:splay-step-allowed}, but the first pass ensures that \emph{every} rotation on the root path of $v$ is valid afterwards.

The algorithm uses the following helper procedure. Let $x$ be a descendant of a node $y$. The procedure $\tsplayTo(x,y)$ executes $\tsplayStep(x)$ until $y$ is the parent or grandparent of $x$. Then, if $y$ is the grandparent of $x$, it executes a final rotation, so $x$ becomes a child of $y$.

We now describe the algorithm. Consider a non-root node $x$ on the root path of $v$, and let $p$ be the parent of $x$. Recall a rotation at $x$ is \emph{not} allowed if and only if $\bdsize{T_x} = 1$ and $\bdsize{T_p} = 2$ (see \cref{p:rot-allowed}). In this case, we call $p$ a \emph{branching node}. We first find all branching nodes $b_1, b_2, \dots, b_k$ on the root path of $v$. We then call $\tsplayTo(v, b_1)$, and subsequently $\tsplayTo(b_i, b_{i+1})$ for each $i \in [k-1]$. Finally, we splay $b_k$ to the root by repeatedly calling $\tsplayStep(b_k)$. This concludes the first pass. The second pass simply consists of repeatedly calling $\tsplayStep$, until $v$ is the root.

It can be seen that only valid rotations are executed. We refer to Berendsohn and Kozma~\cite{BerendsohnKozma2022a} for more details.

\paragraph{Local Two-pass SplayTT.} We also implement a variant of Two-pass SplayTT that essentially does both passes at once. Bringing a node $v$ to the root works by repeating the following. If possible, we call $\tsplayStep(v)$. If not, then the parent or grandparent of $v$ must be a branching node, and we (essentially) perform a $\tsplayStep$ on it to bring it closer to the next higher branching node. Pseudocode for this variant is found in~\cref{alg:l2p}.

Note that this algorithm uses the condition of \cref{p:splay-step-allowed} to determine whether $\tsplayStep(v)$ can be executed. In some cases, this might ``skip'' a branching node that would have been handled separately by the non-local Two-pass SplayTT algorithm. Otherwise, the rotations executed in the Local variant are the same as in the non-local variant (just ordered differently), and correctness follows similarly.

\begin{figure}
	\centering
	\begin{minipage}{.7\textwidth}
		\begin{algorithmic}
			\Procedure{NodeToRootL2PSplay}{$v$}
				\While{$v$ has parent $p$}
					\If{$p$ has parent $g$}
						\If{$\tcanSplayStep(v)$}
							\State $\tsplayStep(v)$
						\Else \Comment{$g$ must be a separator}
							\LComment{Find a branching node}
							\If{$\tisSep(p)$}
								\LComment{$p$ is a branching node}
								\State $\tsplayStep(v)$ \Comment{Must be possible}
							\Else \Comment{$g$ is a branching node}
								\If{$\tcanSplayStep(g)$}
									\State $\tsplayStep(g)$
								\Else
									\State $\trotate(g)$
								\EndIf
							\EndIf
						\EndIf
					\Else \Comment{$p$ is the root}
						\State $\trotate(v)$
					\EndIf
				\EndWhile
			\EndProcedure
		\end{algorithmic}
	\end{minipage}
	\caption{The Local Two-pass SplayTT implementation of \tNodeToRoot{}.}\label{alg:l2p}
\end{figure}

\paragraph{Stability.} All four proposed algorithms (including MoveToRootTT) are stable, as we show in \cref{sec:stability}.

\paragraph{Running time.}\label{sec:splay-analysis}
We now turn to the running-time analysis of the SplayTT variants described above. Recall that we want to achieve $\fO(\log n)$ amortized time per call to $\tNodeToRoot{}$.

We use the potential method~\cite{Tarjan1985}. Our potential function is essentially the sum-of-logs function of Sleator and Tarjan~\cite{SleatorTarjan1985}. Let $T$ be an STT, and let $v \in V(T)$. We define $\phi(T_v) = c \cdot \log( |T_v|+1 )$, and $\Phi(T) = \sum_{v \in V(T)} \phi(T_v)$, where $c$ is a constant to be chosen later.

Since $\tsplayStep$ involves three nodes on a single path in the underlying tree, the following lemmas are easily generalized from the BST case.
\begin{lemma}\label{p:pot-rot}
	Let $T$ be an STT, and let $T'$ be produced from $T$ by a single rotation at $v \in V(T)$. Then $\Phi(T') - \Phi(T) \le 3( \phi(T'_v) - \phi(T_v) )$.
\end{lemma}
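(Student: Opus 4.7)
The plan is to compute $\Phi(T') - \Phi(T)$ exactly by pinning down which subtree node-sets are actually altered by the rotation, and then to bound the telescoping sum using the monotonicity $|T_x| \le |T_y|$ whenever $V(T_x) \subseteq V(T_y)$.

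First, I would analyze the structural effect of the rotation at $v$ with parent $p$, using the description of STT rotations and \cref{fig:stt-rot}. The only pointer changes are: (i) $v$ takes over $p$'s former parent slot, (ii) $p$ becomes a child of $v$, and (iii) the (at most one) child $c$ of $v$ satisfying $p \in \bd(T_c)$ is reparented from $v$ to $p$. From this I would argue that for every node $x \notin \{v,p\}$ we have $V(T'_x) = V(T_x)$: nodes strictly above $p$ in $T$ still have the same set of descendants (their subtrees only get rearranged internally); nodes strictly inside $T_c$, or inside any other subtree hanging off $v$ or $p$, keep their descendant sets; and the subtree rooted at $c$ itself keeps the same node-set even though $c$'s parent changes. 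Hence $\phi(T'_x) = \phi(T_x)$ for all $x \notin \{v,p\}$.

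Second, I would establish the two set-inclusions that drive the bound. Since after the rotation $v$ inherits the position of $p$ (with $p$ and all its former descendants lying beneath $v$), $V(T'_v) = V(T_p)$, which gives $\phi(T'_v) = \phi(T_p)$. Since $p$ is now a descendant of $v$, $V(T'_p) \subseteq V(T'_v) = V(T_p)$, so $\phi(T'_p) \le \phi(T'_v)$. Since before the rotation $v$ was a descendant of $p$, $V(T_v) \subseteq V(T_p) = V(T'_v)$, so $\phi(T_v) \le \phi(T'_v)$, i.e.\ $\phi(T'_v) - \phi(T_v) \ge 0$.

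Third, I combine these. The first step collapses the sum to
\[ \Phi(T') - \Phi(T) = \bigl(\phi(T'_v) - \phi(T_v)\bigr) + \bigl(\phi(T'_p) - \phi(T_p)\bigr). \]
Substituting $\phi(T_p) = \phi(T'_v)$ gives $\Phi(T') - \Phi(T) = \phi(T'_p) - \phi(T_v) \le \phi(T'_v) - \phi(T_v)$. Since this quantity is non-negative, it is also $\le 3(\phi(T'_v) - \phi(T_v))$, yielding the claim.

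The main obstacle is the bookkeeping in the first step: carefully verifying, directly from the rotation definition, that reparenting the distinguished child $c$ does not change $V(T_x)$ for any $x \notin \{v,p\}$ (including $x = c$ itself). Once that invariance is in hand, the rest is a short algebraic manipulation together with the two set-inclusions $V(T_v) \subseteq V(T_p) = V(T'_v) \supseteq V(T'_p)$.
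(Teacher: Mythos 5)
Your proof is correct and is exactly the standard Sleator--Tarjan single-rotation potential argument, which the paper invokes implicitly by stating that the lemma is ``easily generalized from the BST case'' without supplying details. Your bookkeeping that only $v$ and $p$ change subtree node-sets, the identity $V(T'_v)=V(T_p)$, and the use of monotonicity of $\phi$ together with non-negativity of $\phi(T'_v)-\phi(T_v)$ to introduce the factor $3$, are precisely the intended steps.
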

\begin{lemma}\label{p:pot-splay-step}
	Let $T$ be an STT, and let $T'$ be produced from $T$ by a ZIG-ZIG or ZIG-ZAG step at $v \in V(T)$. Then $\Phi(T') - \Phi(T) \le 3( \phi(T'_v) - \phi(T_v) ) - 2c$.
\end{lemma}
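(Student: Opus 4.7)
The plan is to mimic the classical Splay access-lemma analysis. By \cref{p:Steiner-closed-three-path}, the three nodes $v$, $p$, and $g$ lie on a common path in the underlying tree $G$, so the two STT rotations composing a ZIG-ZIG or ZIG-ZAG step act exactly like the corresponding BST rotations on that path. I would first observe that in each individual STT rotation of a node $x$ with its parent $y$, only the subtree sizes of $x$ and $y$ can change: the at-most-one child of $x$ that may be re-attached to $y$ (as depicted in \cref{fig:stt-rot}) keeps its own subtree intact, so $|T_\cdot|$ is unaffected everywhere else. Iterating over the two rotations of the splay step, only the subtree sizes of $v$, $p$, and $g$ can change, so
\[
\Phi(T') - \Phi(T) = \phi(T'_v) + \phi(T'_p) + \phi(T'_g) - \phi(T_v) - \phi(T_p) - \phi(T_g).
\]

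After the splay step, $v$ occupies the former position of $g$, so $|T'_v| = |T_g|$ and $\phi(T'_v) = \phi(T_g)$, cancelling two of the six terms. Writing $r(x) = \log(|T_x|+1)$ and pulling out the constant, it remains to establish
\[
r'(p) + r'(g) - r(v) - r(p) \le 3\bigl(r'(v) - r(v)\bigr) - 2,
\]
where the primed quantities refer to $T'$. Monotonicity gives $|T_v| \le |T_p| \le |T_g| = |T'_v|$ as well as $|T'_p|, |T'_g| \le |T'_v|$, and the inequality splits into the classical \emph{zig-zig} and \emph{zig-zag} sub-cases.

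The heart of each sub-case is the standard concavity bound: if $a, b \ge 1$ with $a + b \le M$, then by AM-GM $\log(a+1) + \log(b+1) \le 2\log\bigl((M+2)/2\bigr) = 2\log(M+2) - 2$. In the ZIG-ZAG case, $T'_p$ and $T'_g$ are disjoint subtrees of $T'_v \setminus \{v\}$, so $|T'_p| + |T'_g| \le |T'_v| - 1$ and hence $r'(p) + r'(g) \le 2r'(v) - 2$; combined with $r(v) \le r(p) \le r'(v)$, the claim follows. In the ZIG-ZIG case, the same concavity inequality is applied instead to $T_v$ and $T'_g$ (disjoint and contained in $T'_v \setminus \{v\}$), yielding $r(v) + r'(g) \le 2r'(v) - 2$, after which $r'(p) \le r'(v)$ closes the bound.

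The main obstacle, more bookkeeping than insight, is verifying the two structural facts used above: that every subtree size outside $\{v, p, g\}$ is preserved, and that in each sub-case the ``after'' subtrees really are disjoint subsets of $T'_v$. Both follow from the colinearity of $v$, $p$, $g$ in $G$ (via \cref{p:Steiner-closed-three-path}) together with a direct inspection of the STT rotation picture in \cref{fig:stt-rot}; once they are checked, the remainder of the argument is identical to Sleator and Tarjan's original treatment, with $c$ eventually chosen large enough (in particular $c \ge 1$) to absorb the constant slack and to accommodate the zig step in \cref{p:pot-rot}.
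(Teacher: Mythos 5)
Your argument is correct and follows the same route the paper intends (the paper simply observes that the lemma "is easily generalized from the BST case," which is exactly the Sleator--Tarjan potential analysis you reproduce, using that $v,p,g$ lie on a path so the two STT rotations behave like BST rotations and only $|T_v|,|T_p|,|T_g|$ change). One small wording slip: in the ZIG-ZIG case you say $T_v$ and $T'_g$ are contained in $T'_v\setminus\{v\}$, but $v\in T_v$; the correct observation is that $T_v$ and $T'_g$ are disjoint and neither contains $p$, so $|T_v|+|T'_g|\le |T'_v|-1$, which is what you actually use, so the final inequality is unaffected.
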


For Splay BST, the remainder of the analysis is easy. We keep executing splay steps at $v$. Thus, if all steps are ZIG-ZIG or ZIG-ZAG, and if $T^0, T^1, \dots, T^k$ is the sequence of trees produced, we have
\begin{align*}
	\Phi(T^k) - \Phi(T^0) & \le \sum_{i=0}^{k-1} 3 ( \phi(T^{i+1}_v) - \phi(T^i_v) ) - 2c\\
	& = 3 ( \phi(T^k_v) - \phi(T^0_v) ) - 2ck\\
	& \le 3 \log |V(T)| - 2ck.
\end{align*}

Setting $c$ to the running time of a single rotation yields the desired amortized running time $\fO( \log |V(T)| )$. The very last step might be a ZIG, but then the amortized running time increases only by an additive constant.

In our SplayTT variants, the splay steps do not produce a single telescoping sum as above. However, we can split the splay steps into a constant number of sets that do telescope nicely.

For Two-pass SplayTT, we refer to Berendsohn and Kozma's analysis~\cite{BerendsohnKozma2022a}. Essentially, each pass produces a telescoping sum adding up to $\fO(\log n)$. While they use a different potential function (which is necessary to prove static optimality), replacing their analysis of $\tsplayStep$ with \cref{p:pot-rot,p:pot-splay-step} yields an overall $\fO( \log n )$ amortized running time. Local Two-pass SplayTT only skips the removal of some branching nodes, which does not affect analysis.

For Greedy SplayTT, we use a different analysis.

\begin{lemma}\label{p:greedy-running-time}
	Performing $\tNodeToRoot{}$ with greedy SplayTT has amortized running time $\fO(\log n )$, where $n = |V(T)|$.
\end{lemma}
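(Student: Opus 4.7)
The plan is to use the standard sum-of-logs potential $\Phi(T) = \sum_{x \in V(T)} \phi(T_x)$ with $\phi(T_x) = c \log(|T_x|+1)$, choosing $c$ large enough to absorb the constant per-rotation cost. Fix one call $\tNodeToRoot(v)$ with iterations $1, \ldots, m$; iteration $i$ executes $\tsplayStep(x_i)$ on $x_i \in \{v, p_i, g_i\}$, where $p_i, g_i$ are the current parent and grandparent of $v$ (a single ZIG at the end contributing only an additive $O(1)$). By \cref{p:pot-splay-step}, the amortized cost of a non-ZIG step is at most $O(1) + 3(\phi(T^i_{x_i}) - \phi(T^{i-1}_{x_i})) - 2c$, so after summing, the total amortized cost of the call reduces to $O(1) + 3S$, where
\[S \;=\; \sum_{i=1}^m \bigl(\phi(T^i_{x_i}) - \phi(T^{i-1}_{x_i})\bigr).\]
The goal is to prove $S = O(\log n)$.

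The first ingredient is a structural invariant: an STT rotation only changes the sizes of the two subtrees directly involved, so a splay step at $y$ alters only $|T_y|, |T_{p_y}|, |T_{g_y}|$. In Greedy SplayTT this has two useful consequences. First, $|T_v|$ can only change in iterations with $x_i = v$, and in those iterations it strictly grows; hence the type-V contributions telescope to $\sum_{i: x_i=v}(\phi(T^i_v) - \phi(T^{i-1}_v)) \le c\log(n+1)$. Second, after a splay step at $x_i$, the new subtree $T^i_{x_i}$ coincides (as a vertex set) with the old subtree $T^{i-1}_{g_{x_i,i-1}}$ of $x_i$'s former grandparent, so each summand of $S$ rewrites as a ``depth-two jump'' $\phi(T^{i-1}_{g_{x_i,i-1}}) - \phi(T^{i-1}_{x_i})$ along the current root path of $v$.

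For the type-P and type-G contributions I would use a lifetime-based telescoping. Partition the iteration indices into maximal intervals during which the identity of $v$'s parent (respectively, grandparent) is a fixed node $P$ (resp.\ $G$). Within such a $P$-lifetime, type-V steps cannot occur (they would immediately end the lifetime), and type-G steps act strictly above $P$ and so leave $|T_P|$ unchanged; only type-P steps at $P$ can grow $|T_P|$. Hence the type-P contributions inside each $P$-lifetime form a genuine telescoping sum, bounded by $c\log(n+1)$ per lifetime, and analogously for type-G inside each $G$-lifetime.

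The main obstacle is avoiding a factor of $m$ when summing across lifetimes --- a naive aggregation yields $O(m_V \log n)$ rather than $O(\log n)$. To overcome this I would chain the per-lifetime telescopings using the nested containment $T_v \subseteq T_{p_i} \subseteq T_{g_i}$ together with the grandparent-equality $\phi(T^i_{x_i}) = \phi(T^{i-1}_{g_{x_i,i-1}})$, so that successive lifetimes contribute pieces of a single global telescoping that is ultimately bounded by the growth of $\phi(T_v)$ plus an $O(\log n)$ boundary term. This is the delicate step; the case analysis of when a lifetime ends (type-V at $v$, or a type-P step that reparents $v$ because $v$ was a direct separator) will have to be checked against the STT rotation rules to ensure the chaining is valid. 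Once $S = O(\log n)$ is established, choosing $c$ larger than the cost of one rotation forces the $-2cm$ term to dominate the $O(m)$ actual cost, and the total amortized cost per $\tNodeToRoot$ call is $O(1) + 3S = O(\log n)$, as claimed.
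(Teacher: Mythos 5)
Your overall strategy is the right one and, at a high level, matches the paper's: use the sum-of-logs potential, apply \cref{p:pot-splay-step} per step, and split the resulting sum $S$ according to which of $v$, the current parent $p$, or the current grandparent $g$ the splay step acts on. The type-V telescoping is correct as you state it, since $|T_v|$ is unchanged by steps at $p$ or $g$ and increases at steps at $v$, so $\sum_{i:x_i=v}(\phi(T^i_v)-\phi(T^{i-1}_v)) \le \phi(T^m_v)-\phi(T^0_v) \le c\log(n+1)$.

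However, the proposal has a genuine gap precisely where you flag it. For the type-P and type-G parts you reduce to per-lifetime telescoping and then state that the lifetimes must be ``chained'' to avoid an $O(m_V\log n)$ aggregation, but you do not carry this out and explicitly say the case analysis ``will have to be checked.'' That chaining is the entire content of the lemma; without it the argument does not establish $S=O(\log n)$. (The chaining \emph{can} be made to work: at every lifetime transition the new holder of the parent/grandparent role is strictly higher on the root path of $v$, and its subtree at that moment contains the old one, so the per-lifetime boundary terms telescope globally; but that needs to be stated and verified case by case, which you have not done.) The paper sidesteps the lifetime bookkeeping entirely by maintaining three counters $t_1,t_2,t_3$ with invariants $t_1\le 3\phi(T_v)$, $t_2\le 3\phi(T_p)$, $t_3\le 3\phi(T_g)$ referencing the \emph{current} parent and grandparent rather than fixed node identities. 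A step at one of the three roles is charged to the corresponding counter; the other two invariants are trivially preserved because the potential of the subtree rooted at the current role-holder can only increase during the procedure. This makes the conclusion $t_1+t_2+t_3\le 9c\log(n+1)$ immediate, with no cross-lifetime case analysis. You should either adopt that formulation or complete the chaining argument you sketch.
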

\begin{proof}
	Fix $v$ and consider a call $\tNodeToRoot{v}$ with greedy SplayTT. Let $T$ be the current tree before or after some $\tsplayStep$ during execution, let $p$ and $g$ be the parent and grandparent of $v$ in $T$ (either is $\bot$ if that node does not exist). For $x \in V(T)$, define $\psi(T_x) = 3 \cdot \phi(T_x)$, and we define $\psi(T_x) = 3c \cdot \log (n+1)$ for non-existing nodes $x = \bot$. Let $\Psi(T) = \psi(T_v) + \psi(T_p) + \psi(T_g)$. We claim that $\Psi(T)$ is an upper bound for the amortized running time so far, for (essentially) every intermediate tree $T$, and in particular the final tree. This clearly implies the lemma.
	
	At the start, the claim is trivially true. Now suppose we execute some $\tsplayStep$. Let $T$, $T'$ be the STT before and after the execution, and let $p$, $p'$, $g$, $g'$ be the parent and grandparent of $v$ in the respective tree. Recall that the amortized running time of $\tsplayStep(x)$ is $3(\phi(T'_x) - \phi(T_x)) = \psi(T'_x) - \psi(T_x)$.
	\begin{enumerate}[(i)]
		\item\label{item:greedy-analysis-simple} If we call $\tsplayStep(v)$, then the amortized cost is $\psi(T'_v) - \psi(T_v)$. The nodes $p$ and $g$ are simply removed from the root path of $v$, so $p'$ and $g'$ are ancestors of $g$ in $T$ (or nonexistent). This implies that $\psi(T_p) \le \psi(T'_{p'})$ and $\psi(T_g) \le \psi(T'_{g'})$, so
		\begin{align*}
			\psi(T'_v) - \psi(T_v) \le \Psi(T') - \Psi(T).
		\end{align*}
		
		\item\label{item:greedy-analysis-complicated} Suppose we call $\tsplayStep(p)$. Since $\tsplayStep(v)$ was disallowed, we know (by \cref{p:splay-step-allowed}) that at least one of $v$ and $p$ is not a separator in $T$, and $g$ is a separator in $T$.
		
		If $p$ is a separator in $T$, then $v$ is not. This means that $\tsplayStep(p)$ removes $g$ and its parent $h$ from the root path of $v$ (i.e., $v$ stays a child of $p$), and essentially the same analysis as in \ref{item:greedy-analysis-simple} applies.
		
		If $p$ is not a separator, then a ZIG-ZIG step at $p$ is performed. If $v$ is a not a direct separator, again $g$ and $h$ are removed from the root path of $v$.
		Otherwise, we cannot guarantee that the invariant holds after this step. However, we can show that the step directly after is a ZIG-ZAG step, and the invariant holds afterwards.
		
		\begin{figure}
			\centering
			\fbox{
				\begin{tikzpicture}[x=4mm, y=8mm, rotarrow/.style={-latex, shorten >=2mm, shorten <=2mm}]
					\begin{scope}[local bounding box=stage1]
						\node (h) at (0,0) {$h$};
						\node (g) at (0,-1) {$g$}; \draw (h) -- node[labelr] {\tiny \sep} (g);
						\node (p) at (0,-2) {$p$}; \draw (g) -- node[labelr] {\tiny \nsep} (p);
						\node (v) at (0,-3) {$v$}; \draw (p) -- node[labelar] {\tiny \dsep} (v);
						\node at (0,-4) {$(T)$};
					\end{scope}
					\begin{scope}[local bounding box=stage2, xshift=25mm]
						\node (g) at (0,0) {$g$};
						\node (p) at (-1,-1) {$p$}; \draw (g) -- node[labelal] {\tiny \nsep} (p);
						\node (h) at (1,-1) {$h$}; \draw (g) -- (h);
						\node (v) at (-1,-2) {$v$}; \draw (p) -- node[labelar] {\tiny \dsep} (v);
					\end{scope}
					\begin{scope}[local bounding box=stage3, xshift=50mm]
						\node (p) at (0,0) {$p$};
						\node (g) at (0,-1) {$g$}; \draw (p) -- node[labelr] {\tiny \sep} (g);
						\node (v) at (-1,-2) {$v$}; \draw (g) -- node[labelal] {\tiny \dsep} (v);
						\node (h) at (1,-2) {$h$}; \draw (g) -- (h);
						\node at (0,-4) {$(T')$};
					\end{scope}
					\begin{scope}[local bounding box=stage5, xshift=85mm]
						\node (v) at (0,0) {$v$};
						\node (p) at (-1,-1) {$p$}; \draw(v) -- (p);
						\node (g) at (1,-1) {$g$}; \draw (v) -- (g);
						\node (h) at (1,-2) {$h$}; \draw (g) -- (h);
						\node at (0,-4) {$(T'')$};
					\end{scope}
					\node (arrhelp) at (0,-1.5) {};
					\newcommand{\stagearrow}[3]{\draw[rotarrow] ($(#1.north east)!(arrhelp)!(#1.south east)$) -- node[above] {#2} ($(#3.north west)!(arrhelp)!(#3.south west)$)}
					\stagearrow{stage1}{\rotlabel{g}}{stage2};
					\stagearrow{stage2}{\rotlabel{p}}{stage3};
					\stagearrow{stage3}{\scriptsize (ZIG-ZAG at $v$)}{stage5};
				\end{tikzpicture}
			}
			\caption{Illustration of the special case in \cref{p:greedy-running-time} \ref{item:greedy-analysis-complicated}, where we first perform a ZIG-ZIG at $p$, and then a ZIG-ZAG at $v$. ``\sep'', ``\nsep'', and ``\dsep'' have the same meaning as in \cref{fig:splay-step}.}\label{fig:greedy-special-case}
		\end{figure}

		\Cref{fig:greedy-special-case} illustrates the situation. The key insight is that $g$ is a separator in $T'$ (after the current ZIG-ZIG step at $p$). To see this, observe that $\bd(T_g) = \{h, a\}$ for some node $a$ that is a proper ancestor of $h$. On the other hand, we have $a \notin \bd(T_p)$, since $p$ is not a separator. Further observe that $V(T'_g) \supseteq V(T_g) \setminus V(T_p)$. Thus, we have $a \in \bd(T'_g)$, so $g$ is a separator in $T'$.
		
		Since $v$ is still a direct separator in $T'$ (note that $\bd(T_v) = \bd(T'_v) = \{p,g\}$ by assumption), the next $\tsplayStep$ will be a ZIG-ZAG at $v$. Call the resulting tree $T''$ (see \cref{fig:greedy-special-case}). We now prove that $\Psi(T'') - \Psi(T)$ bounds the amortized running time $t$ of the two $\tsplayStep$s, which is
		\[t = \psi(T'_p) - \psi(T_p) + \psi(T''_v) - \psi(T'_v).\]
		
		We have $V(T_v) = V(T'_v)$, $V(T'_p) = V(T''_v)$, implying
		\begin{align*}
			& \psi(T''_v) - \psi(T'_v) = \psi(T''_v) - \psi(T_v), & \text{and}\\
			& \psi(T'_p) - \psi(T_p) = \psi(T''_v) - \psi(T_p) \le \psi(T''_{p''}) - \psi(T_p).
		\end{align*}
		
		Moreover, since $V(T_h) = V(T''_v)$, we have \[ \psi(T_g) < \psi(T_h) = \psi(T''_v) \le \psi(T''_{g''}), \]
		so $\psi(T''_{g''}) - \psi(T_g) > 0$. Thus,
		\[ t < \psi(T''_v) - \psi(T_v) + \psi(T''_{p''}) - \psi(T_p) + \psi(T''_{g''}) - \psi(T_g) = \Psi(T'') - \Psi(T), \]
		so the running time of both $\tsplayStep$s together is bound by the change in~$\Psi$.
		
		\item Finally, suppose we call $\tsplayStep(g)$. Then $\tsplayStep$ at both $v$ and $p$ must be disallowed. The former implies that $g$ is a separator, and the latter implies that $p$ and $g$ cannot be separators at the same time, so $p$ is not a separator (in $T$). Thus, $\tsplayStep(g)$ simply removes $h$ and the parent of $h$ from the root path of $v$, so we have $g = g'$ and $p = p'$ and the potential of $v$ and $p$ does not change. Hence, the amortized cost $\psi(T'_g) -\psi(T_g)$ precisely matches the change $\Psi(T')-\Psi(T)$.\qedhere
	\end{enumerate}
\end{proof}

It remains to analyze potential increases in \opLink{}, \opCut{}, and \opPathWeight{}, which we do below.
\begin{theorem}
	Starting with a dynamic forest on $n$ nodes without edges, using any of the above SplayTT variants, $m$ operations \opLink{}, \opCut{}, and/or \opPathWeight{} are performed in time $\fO( n + m \log n )$.
\end{theorem}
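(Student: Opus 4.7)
My plan is to extend the amortized analysis of \tNodeToRoot{} to the three top-level operations by tracking the potential function $\Phi(T) = \sum_{v \in V(T)} c \log(|T_v| + 1)$ already used for the SplayTT variants, and bounding both the initial potential and any potential increase caused by steps of \opLink{}, \opCut{}, \opPathWeight{} outside the \tNodeToRoot{} calls. The standard telescoping identity gives total actual cost $\le \Phi(\text{initial}) - \Phi(\text{final}) + \sum (\text{amortized costs})$, and $\Phi$ is always nonnegative, so it suffices to bound the initial potential and the amortized cost per operation.

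First I would observe that initially every tree is a single isolated vertex, so each node contributes $c \log 2$ to $\Phi$, giving $\Phi(\text{initial}) = O(n)$. Next, I would argue that each operation has amortized cost $O(\log n)$. Both \opLink{} and \opCut{} perform two \tNodeToRoot{} calls, each of amortized cost $O(\log n)$ by the lemmas already proven in \cref{sec:splay-analysis}. The remaining constant-time structural change needs to be examined separately: in \opLink{u, v, w}, after the calls both $u$ and $v$ are roots, and making $v$ the parent of $u$ leaves the size of every subtree unchanged except $F_v$, which grows from some $b$ to some $a + b \le n$, contributing at most $c \log(n + 1)$ to $\Phi$. In \opCut{u, v}, the detachment only decreases subtree sizes, so the potential change is nonpositive. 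In \opPathWeight{}, no structural change occurs past the \tNodeToRoot{} calls; stability of \tNodeToRoot{} bounds the length of the root-path traversal by a constant, so the extra real work is $O(1)$ and $\Delta \Phi = 0$.

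Putting this together, each of the $m$ operations contributes amortized cost $O(\log n)$, and the initial potential is $O(n)$. Summing and using $\Phi(\text{final}) \ge 0$ yields a total actual running time of $O(n + m \log n)$, as claimed.

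I do not expect any serious obstacle here: the argument is almost entirely bookkeeping on top of the \tNodeToRoot{} analyses already carried out for all four heuristics, together with the observation that \opLink{} is the only operation that can \emph{increase} $\Phi$ outside of \tNodeToRoot{}, and does so by at most $O(\log n)$. The only mildly delicate point is verifying that, after the two \tNodeToRoot{} calls in \opLink{}, the only subtree whose size changes during the parent assignment is $F_v$; this follows because $u$ is a root immediately before being linked, so no ancestor subtree containing $v$ exists whose size would need updating.
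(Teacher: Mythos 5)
Your proof is correct and essentially reproduces the paper's argument: same potential function, same $O(n)$ bound on the initial potential, and the same key observation that in \opLink{} the only subtree whose potential can increase is $F_v$, since $v$ is a root after the two \tNodeToRoot{} calls. One small slip in your closing remark: you attribute this to ``$u$ is a root,'' but the operative fact is that $v$ is a root (so $v$ has no proper ancestors whose subtrees would also grow); your main paragraph states the argument correctly.
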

\begin{proof}
	\opLink{}, \opCut{}, and \opPathWeight{} each require up to two calls to \tNodeToRoot{}, along with a constant amount of additional work, for an amortized cost of $\fO( \log n )$. \opLink{} and \opCut{} additionally change the tree structure at the end, which changes the potential. $\opCut{u,v}$ only decreases the potential of $v$, since $v$ loses a child. $\opLink{u,v}$ adds $u$ as a child of $v$ at the end. However, at that point, $v$ is the root of its tree, hence only the potential of $v$ increases, and by at most $3 c \cdot \log n$. Thus, the amortized time of every operation is $\fO( \log n )$. At the start, the forest has no edges, so search trees consist of only one node each, for a total starting potential of $\fO(n)$. This yields at total running time of $\fO( n + m \log n )$.
\end{proof}

\subsubsection{SplayTT vs.\ link-cut trees}\label{sec:splaytt-vs-link-cut}

Sleator and Tarjan's link-cut trees~\cite{SleatorTarjan1983,SleatorTarjan1985} maintain a decomposition of the underlying tree into paths, and consist of a hierarchy of BSTs on these paths. Moving a node $v$ to the root is performed roughly as follows. First, for each BST $B$ between $v$ and the root, splay the node $v_B$ to the root of $B$, where $v_B$ is the lowest node in $B$ that is an ancestor of $v$ in the overall link-cut tree. This shortens the path from $v$ to the root, such that every node on that path comes from a different BST. Then, an operation called \emph{splice} is performed, which splits and merges BSTs until the path from $v$ to the root is contained in a single BST. Finally, $v$ is splayed to the root.

The Two-pass SplayTT algorithm, in a way, works very similar. Bose, Cardinal, Iacono, Koumoutsos, and Langerman~\cite{BoseCardinalEtAl2020} observed that link-cut trees essentially are 2-cut STTs. We remark that, disregarding the left-right order of BST nodes in the link-cut tree, SplayTT performs almost the same rotations as link-cut trees. The main difference is that no path-decomposition is maintained; instead, SplayTT ``automatically'' detects a decomposition of the search path.

As discussed in \cref{sec:intro:rooted-unrooted}, link-cut trees ``natively'' maintain \emph{rooted} forests. Each path in the decomposition can be seen as oriented towards the root, and this orientation is preserved by the left-right order in the corresponding BST. To support \opEvert{} (and hence arbitrary \opLink{}s in unrooted forests), it must be possible to \emph{reverse} paths and the corresponding BSTs. To preserve the $\fO(\log n)$ amortized cost this has to be done lazily using a \emph{reverse bit}. This complicates the implementation, and we speculate that it is the main reason why our data structures outperform link-cut trees in our experiments with unrooted forests.

\section{Experimental evaluation}\label{sec:experiments}

We performed multiple experiments on our Rust implementation. We used an AMD Ryzen 5 2600X processor running Debian \emph{Bullseye} at 3.6~GHz with 512~KB of L1 cache, 3~MB of L2 cache, and 16~MB of L3 cache.

\subsection{Algorithms}\label{sec:exp-algos}
We now describe the data structures we implemented and compared in our experiments.

\paragraph{Edge-weighted unrooted forests.}
Using our basic 2-cut STT data structure (\cref{sec:2cut-impl}) and the \opLink{}, \opCut{}, and \opPathWeight{} procedures described in \cref{sec:link-cut,sec:edge-weights} together with one of the four \tNodeToRoot{} algorithms (Greedy SplayTT, Two-pass SplayTT, Local two-pass SplayTT, and MoveToRootTT), we obtain four different implementations. We call these implementations \emph{Stable} since they assume stability.

As mentioned in \cref{sec:link-cut}, our Rust implementation also includes procedures for \opLink{}, \opCut{}, and \opPathWeight{} that do not assume stability, but additionally require the \tNodeBelowRoot{} operation. We have four corresponding \tNodeBelowRoot{} implementations as slight variants of the four \tNodeToRoot{} implementations.\footnote{Found next to the respective \tNodeToRoot{} implementation in the source code.}

We denote the resulting eight data structures as (\dsimpl{Stable}) \dsimpl{Greedy Splay}, (\dsimpl{Stable}) \dsimpl{2P Splay}, (\dsimpl{Stable}) \dsimpl{L2P Splay}, and (\dsimpl{Stable}) \dsimpl{MTR}.

Further, we have \dsimpl{Link-cut}\footnote{Found in \implfile{stt/src/link_cut.rs} in the source code.}, an implementation of the amortized variant of Sleator and Tarjan's link-cut trees~\cite{SleatorTarjan1985}, where the handling of edge weights is similar to the way described in \cref{sec:edge-weights}.\footnote{Sleator and Tarjan only describe how to maintain vertex weights. Tarjan and Werneck~\cite{Werneck2006} simulate edge weights by adding a vertex on each edge and maintaining vertex weights. We did not test this approach.}

Finally, we have two linear-time data structures. \dsimpl{1-cut}\footnote{Found in \implfile{stt/src/onecut.rs}} is a naive dynamic forest implementation that maintains a rooting of each tree (i.e., a 1-cut search tree on each tree). The dynamic forest operations are implemented as described above, where $\tNodeToRoot{v}$ repeatedly rotates the root with one of its children until $v$ is the root.
\dsimpl{Petgraph}\footnote{Found in \implfile{stt/src/pg.rs}} is a naive dynamic forest implementation using the Petgraph\footnote{\url{https://crates.io/crates/petgraph}} library, which appears to be the most popular graph library for Rust at the time of writing. We tested the other implementations using \dsimpl{Petgraph} as a reference.

\paragraph{Rooted forests.}
We also implemented data structures maintaining rooted forests without edge weights. We support \opLink{}, \opCut{}, \opFindRoot{} and (depending on the experiment) \opEvert{}.
An extension of our STT-based data structures is sketched in \cref{sec:stts-rooted} and yields four variants \dsimpl{Greedy Splay}, \dsimpl{2P Splay}, \dsimpl{L2P Splay}, and \dsimpl{MTR}. Since we use both \tNodeToRoot{} and \tNodeBelowRoot{}, there are no stable variants.

\dsimpl{Link-cut} is the same link-cut tree implementation as above. If \opEvert{} is not needed, we disable any checks and modifications of the reverse bit (though the slight space overhead remains).
Finally, \dsimpl{Simple}\footnote{Found in \implfile{stt/src/rooted.rs}} is a naive implementation that maintains the rooted forest explicitly via parent pointers.

\subsection{Experiments and results}
We now describe our experiments and discuss their results. To reduce variance, we performed every (sub)experiment between ten and twenty times. This section only shows a selection of results. More detailed result tables are found in \cref{sec:detailed-results}. All experiments can be reproduced by calling a single script provided with the source code; see the included \texttt{README.md} file for more details.

\newcommand{\dsMTRorStable}{(\dsimpl{Stable}) \dsimpl{MTR}}

\paragraph{Uniformly random connectivity queries.}\label{sec:exp:uniform}
In our first experiment, weights are empty, so the updating logic from \cref{sec:edge-weights} is not required and $\opPathWeight{u,v}$ simply returns whether $u$ and $v$ are connected or not. This allows us to directly compare the dynamic forest implementations without edge weight handling.

A list of queries is pre-generated, starting with an empty forest. For each query, we draw two vertices $u,v$ uniformly at random; if $u$ and $v$ are not connected, we call $\opLink{u,v}$; otherwise, we either call $\opPathWeight{u,v}$ or call $\opCut{}$ on some edge on the path between $u$ and~$v$, with probability $\frac{1}{2}$ each. We then execute the list of queries once for each implementation.

First, we compared all implementations for $n \le 1000$ vertices and $m = 20n$ queries. \dsimpl{Petgraph} performed very badly (worse than the next-worst implementation by a factor of over 15 at $n = 1000$), so we excluded it from all further experiments. We then tried larger values of $n$ up to 8000, with $m = 100n$.\footnote{The maximum value for $n$ is chosen such that the asymptotically worse behavior of \dsimpl{1-cut} is clearly visible, but the overall experiment still takes a reasonable amount of time. The same applies to the other experiments.}
The results for $n = 8000$ are shown in the first column of \cref{tab:urc-msf-fdc} (see \cref{tab:full_queries} in \cref{sec:detailed-results} for more details).

Our SplayTT variants consistently outperform Link-cut trees by up to 25\%. Among them, the stable variants are usually slightly faster than the non-stable ones, and \dsimpl{Greedy Splay}/\dsimpl{L2P Splay} are slightly faster than \dsimpl{2P Splay}. All of this points towards simple algorithms performing better in practice.

The even simpler \dsimpl{MTR} and \dsimpl{Stable MTR} are faster than all Splay-based data structures, perhaps because of the uniformity of the input (as discussed in \cref{sec:mtr}). The simple linear-time 1-cut data structure is faster for smaller values of $n$ (up to around 3000, see \cref{tab:full_queries} in \cref{sec:detailed-results}), but is the worst by some margin at $n = 8000$.

\paragraph{Incremental MSF.}\label{sec:exp:inc-msf}
Our second, more practical experiment consists of solving the incremental minimum spanning forest (MSF) problem.

We are given the edges of a weighted graph one-by-one and have to maintain an MSF. Edges are never removed. A simple solution using dynamic forests works as follows. Whenever an edge $\{u,v\}$ with weight~$w$ arrives, if $u$ and $v$ are in different components, add the edge to the forest. Otherwise, find the heaviest edge on the path from $u$ to $v$, and if its weight is larger than $w$, replace it with the new edge.

To find the actual heaviest edge instead of just its weight, we extend our edge weight monoid $(\N, \max)$ to also contain a heaviest edge. The result is still a monoid, hence our algorithms can be used without change.

\begin{table}[t]
	\centering\small
	\begin{tabular}{lccc}
		\toprule
		& \multicolumn{3}{c}{Time (\textmu{}s/query or \textmu{}s/edge)}\\
		\cmidrule{2-4}
		Algorithm & URC & \multicolumn{2}{c}{MSF}\\
		& $n = 8000$ & $n = 10^6$ & \texttt{ogbl}\\
		\midrule
		Link-cut & $0.65$ & $3.43$ & $0.58$\\
Greedy Splay & $0.54$ & $2.94$ & $0.39$\\
Stable Greedy Splay & $0.52$ & $2.90$ & $0.40$\\
2P Splay & $0.56$ & $3.05$ & $0.40$\\
Stable 2P Splay & $0.53$ & $2.99$ & $0.40$\\
L2P Splay & $0.54$ & $3.01$ & $0.39$\\
Stable L2P Splay & $0.52$ & $3.01$ & $0.40$\\
MTR & $0.44$ & $2.59$ & $0.36$\\
Stable MTR & $0.44$ & $2.60$ & $0.39$\\
1-cut & $0.77$ & $5.32$ & $0.18$\\
Kruskal (petgraph) & -- & $0.63$ & $0.17$\\

		\bottomrule
	\end{tabular}
	\caption{Results for uniformly random connectivity (URC), incremental MSF for random queries, and incremental MSF on the \texttt{ogbl-collab} dataset.}\label{tab:urc-msf-fdc}
\end{table}

As a first experiment, we follow Tarjan and Werneck~\cite{TarjanWerneck2010} and randomly generate inputs on $n \le 10^6$ vertices with $m = 8n$ edges.
Second, we use the \texttt{ogbl-collab} data set\footnote{Available under the ODC Attribution License at\\\url{https://ogb.stanford.edu/docs/linkprop/\#ogbl-collab}}~\cite{HuFeyEtAl2020} to generate an input that might be closer to real-world applications. The data set consists of a set of authors and collaborations between authors, annotated with a year. We interpret this as a dynamically changing graph where the first collaboration creates an edge with weight 1, and each subsequent collaboration increases the weight of the edge. Inverting the edge weights yields a natural dynamic MSF problem, with the additional allowed operation of \emph{decreasing} an edge weight, which can be easily implemented by first removing the edge (if it exists in the current MSF), and then adding it again with the new weight. The resulting input has 235\,868 vertices and 1\,179\,052 queries.

We also compare the online algorithms with the Petgraph library's implementation of Kruskal's offline algorithm.

Kruskal's algorithm outperforms the online algorithms by a large factor (this is expected, since it is offline and less general). Otherwise, the results of this experiment are similar to the uniformly random query experiments, except that \dsimpl{Stable Greedy Splay} is now clearly the fastest among the Splay-based data structures. It is not clear why this is not the case in the previous experiment, but we note that \dsimpl{Stable Greedy Splay} is our simplest SplayTT-based data structure. The results of the \texttt{ogbl-collab} experiment are similar except that the the difference between stable and non-stable implementations does not exist, for unknown reasons.

\paragraph{Random queries with variable probability of \opPathWeight{}.} Informal experiments with a naive fully-dynamic connectivity algorithm lead us to believe that \dsimpl{Link-Cut} performs better compared to our approaches when \opPathWeight{} queries are common (and thus the reverse bit is rarely changed). Hence, we repeated the first experiment (\cref{sec:exp:uniform}) with $n = 5000$, except that the probability $p$ of generating a \opPathWeight{} query (instead of a \opCut{}) is variable. \Cref{fig:query-path-prob} shows that the performance of link-cut gets closer to the STT data structures as $p$ approaches 1 (even slightly outperforming the weaker \dsimpl{2P~Splay} variant), confirming our suspicion.

\paragraph{Degenerate queries.}
\dsimpl{MTR} and \dsimpl{Stable MTR} outperform the other algorithms on uniform queries, despite having asymptotic worst-case performance of $\Theta(n)$ per operation. To experimentally confirm the worst-case behavior, we create a path $G$ of $n \le 10\,000$ nodes $v_1, v_2, \dots, v_n$, and then call $\opPathWeight{v_i, v_n}$ for all $i \in [n]$ in order. While the queries have strong locality, the two vertices $v_i, v_n$ are very far from each other on average. All Splay-based approaches are able to exploit the locality and outperform the linear-time data structures (\dsimpl{MTR}, \dsimpl{Stable MTR}, and \dsimpl{1-cut}) by a factor of over 100 when $n = 10\,000$.

To check how ``isolated'' our degenerate example is, we performed the following ``noisy'' experiment. Fixing $n = 5000$, for each $i \in [n]$, we call $\opPathWeight{v_j, v_n}$, where $j = i + \lfloor x \rfloor$ and $x$ is drawn from a normal distribution with mean 0 and standard deviation $\sigma$, for some values $\sigma \le 300$.
(See \cref{fig:degenerate-noisy}.)
As expected, \dsimpl{1-cut} still performs very badly, since the added noise does not change the expected distance between $v_i$ and $v_n$. \dsimpl{MTR} and \dsimpl{Stable MTR}, on the other hand, do adapt, though even with $\sigma = 300$ both are still slower than the Splay-based variants by at least 10\%.

\begin{figure}
	\newcommand\plotRed{\addplot[color=red, mark=*, mark options={scale=0.6}]}
	\newcommand\plotGreen{\addplot[color={black!40!green}, mark=square*, mark options={scale=0.5}]}
	\newcommand\plotBlue{\addplot[color=blue, mark=diamond*, mark options={scale=0.7}]}
	
	\begin{minipage}[c]{.5\textwidth}
		\centering
		\begin{tikzpicture}
			\begin{axis}[
				width={0.8\textwidth},
				height={50mm},
				xlabel={$p$},
				ylabel={\textmu{}s/query},
				xmin=0, xmax=1,
				ymin=0, ymax=0.8,
				legend pos=south east
			]
				\pgfplotstableread{figures/data_queries_path_prob.txt}\datatable
				
				\plotBlue table[y = Link-cut] from \datatable;
				\addlegendentry{\small Link-cut}
				
				\plotRed table[y = 2P_Splay] from \datatable;
				\addlegendentry{\small 2P Splay}
				
				\plotGreen table[y = Stable_Greedy_Splay] from \datatable;
				\addlegendentry{\small Stable Greedy Splay}
			\end{axis}
		\end{tikzpicture}
	\end{minipage}%
	\begin{minipage}[c]{.5\textwidth}
		\centering
		\begin{tikzpicture}
			\begin{axis}[
				width={0.8\textwidth},
				height={50mm},
				xlabel={$\sigma$},
				ylabel={\textmu{}s/query},
				xmin=0, xmax=300,
				ymin=0, ymax=25,
				legend pos=north east
			]
				\pgfplotstableread{figures/data_degenerate_noisy.txt}\datatable
				
				\plotRed table[y = MTR] from \datatable;
				\addlegendentry{\small MTR}
				
				\plotGreen table[y = 1-cut] from \datatable;
				\addlegendentry{\small 1-cut}
				
				\plotBlue table[y = Link-cut] from \datatable;
				\addlegendentry{\small Link-cut}
			\end{axis}
		\end{tikzpicture}
	\end{minipage}
	
	\begin{minipage}[t]{.45\textwidth}
		\captionof{figure}{Random queries with increasing probability $p$ of \opPathWeight{}.}\label{fig:query-path-prob}
	\end{minipage}\hfill%
	\begin{minipage}[t]{.45\textwidth}
		\captionof{figure}{Noisy degenerate input.}\label{fig:degenerate-noisy}
	\end{minipage}
\end{figure}

\paragraph{Lowest common ancestors.}
In our final two experiments, we maintain a \emph{rooted} forest on $n$ vertices and execute $10n$ queries among $\opLink{u, v}$, $\opCut{v}$, and $\opLCA{u,v}$, the latter of which returns the lowest common ancestor of two nodes in the same tree. The query distribution is as follows. A random non-root node is \opCut{} with probability
$\frac{1}{2} \cdot \frac{m}{n-1}$, where $m$ is the current number of non-root nodes. Otherwise, a pair of nodes $\{u,v\}$ is generated uniformly at random, and $\opLink{u,v}$ or $\opLCA{u,v}$ is chosen depending on whether $u$ and $v$ are in the same tree. Overall, we have roughly 46\% \opLink{}s, 38\% \opCut{}s and 16\% \opLCA{}s.

In the second experiment, we additionally allow $\opEvert{v}$, i.e., changing the root of a tree. Each $\opCut{}$ is replaced with $\opEvert{}$ with probability $\frac{1}{2}$, resulting in roughly 30\% \opLink{}s, 20\% \opCut{}s, 30\% \opLCA{}s and 20\% \opEvert{}s.

As expected, \dsimpl{Link-cut} outperforms our data structures considerably in the first experiment, where only the latter have to maintain extra data (to represent rooted trees). When \opEvert{} is allowed, somewhat surprisingly, STT-based data structures are faster again. The \dsimpl{Simple} data structure performed much worse than all others and was excluded from experiments with large~$n$. See \cref{tab:full_lca,tab:full_lca_evert} for more details.

\begin{table}
	\centering\small
	\begin{tabular}{lccc}
		\toprule
		& \multicolumn{3}{c}{Time (\textmu{}s/query)}\\
		\cmidrule{2-4}
		Algorithm & \multicolumn{2}{c}{Without \opEvert{}} & With \opEvert{}\\
		& $n = 20000$ & $n = 10^6$ & $n = 10^6$\\
		\midrule
		Link-cut & $0.31$ & $0.83$ & $1.34$\\
Greedy Splay & $0.42$ & $0.94$ & $1.12$\\
2P Splay & $0.43$ & $0.95$ & $1.13$\\
L2P Splay & $0.41$ & $0.93$ & $1.13$\\
MTR & $0.34$ & $0.85$ & $1.00$\\
Simple & $2.59$ & -- & --\\

		\bottomrule
	\end{tabular}
	\caption{Results for the LCA experiment.}\label{tab:lca}
\end{table}

\subsection{Notes on the Rust implementation}

All implementations share a common interface (the Rust \emph{traits} \texttt{DynamicForest}, resp.\ \texttt{RootedDynamicForest}) that is used by the experiments. Code is reused whenever possible through heavy use of generics.

There are some differences between the pseudocode presented here and the actual Rust implementation. This is due to the fact that procedures like $\tcanRotate(v)$ and $\tcanSplayStep(v)$ contain multiple $\tisSep(\cdot)$ checks, which can cause an unnecessarily large number of calls to the $\tparent(\cdot)$ function, even though the parent and possibly further ancestors of $v$ may be already known (consider, e.g., \cref{alg:mtr,alg:greedy}). Hence, we eliminated some of the additional calls by, e.g., introducing a function $\mathtt{is\_separator\_hint}(v,p)$, which is more efficient, but requires $p = \tparent(v)$ to be given.

We applied this principle liberally in all STT-based variants and our \dsimpl{Link-cut} implementation. The performance gains were small, except for \dsimpl{Greedy Splay}, which was slightly slower than the other STT-based variants before, and now is slightly faster. We did not attempt any fine-tuning beyond this.

\section{Conclusion}\label{sec:conclusion}

We presented a new framework to implement dynamic forests based on STTs. Our data structures are as capable as link-cut trees, with a wide range of applications.
For maintaining unrooted forests, our framework is arguably conceptually simpler than link-cut trees, since there is no need to explicitly maintain a (directed) path decomposition of the underlying forest. The main complexity lies in the implementation of the STT rotation primitive, which is easily separated and reused, simplifying the engineering of new variants. In contrast, variants of link-cut trees are somewhat restricted by the decomposition into BSTs; for example, no equivalent of our Greedy SplayTT algorithm for link-cut trees exists.

In our experiments, the SplayTT-based data structures outperform link-cut trees by 15-20\% if the dynamic forest is unrooted. Link-cut trees in turn are roughly 15-25\% faster for rooted dynamic forests (without the root-changing \opEvert{} operation).
A next step would be to attempt fine-tuning of our implementations and compare them with existing dynamic forest (in particular link-cut tree) implementations.

Among the SplayTT-based variants we tested, \dsimpl{Stable Greedy Splay} generally performed best, and is also the simplest to implement and analyze. However, the even simpler \dsimpl{MTR} algorithm outperformed our more sophisticated algorithms by around 15\%, except for specifically constructed inputs. It would be interesting to investigate whether there exist practical applications where the adaptivity of Splay-based data structures makes up for their increased complexity.

\bibliography{info}

\newcommand{\etalchar}[1]{$^{#1}$}
\begin{thebibliography}{AAK{\etalchar{+}}12}
\expandafter\ifx\csname url\endcsname\relax
  \def\url#1{\texttt{#1}}\fi
\expandafter\ifx\csname doi\endcsname\relax
  \def\doi#1{\burlalt{doi:#1}{http://dx.doi.org/#1}}\fi
\expandafter\ifx\csname urlprefix\endcsname\relax\def\urlprefix{URL }\fi
\expandafter\ifx\csname href\endcsname\relax
  \def\href#1#2{#2}\fi
\expandafter\ifx\csname burlalt\endcsname\relax
  \def\burlalt#1#2{\href{#2}{#1}}\fi

\bibitem[AAK{\etalchar{+}}12]{AgarwalEtAl2012}
Pankaj~K. Agarwal, Lars Arge, Haim Kaplan, Eyal Molad, Robert~Endre Tarjan, and
  Ke~Yi.
\newblock An optimal dynamic data structure for stabbing-semigroup queries.
\newblock {\em SIAM J. Comput.}, 41:104--127, 2012.
\newblock \doi{10.1137/10078791X}.

\bibitem[ABH{\etalchar{+}}04]{AcarEtAl2004}
Umut~A. Acar, Guy~E. Blelloch, Robert Harper, Jorge~L. Vittes, and Shan
  Leung~Maverick Woo.
\newblock Dynamizing static algorithms, with applications to dynamic trees and
  history independence.
\newblock In {\em Proceedings of the Fifteenth Annual ACM-SIAM Symposium on
  Discrete Algorithms}, SODA '04, pages 531--540, USA, 2004. Society for
  Industrial and Applied Mathematics.
\newblock \doi{10.1145/990000/982871}.

\bibitem[AHLT05]{AlstrupEtAl2005}
Stephen Alstrup, Jacob Holm, Kristian~De Lichtenberg, and Mikkel Thorup.
\newblock Maintaining information in fully dynamic trees with top trees.
\newblock {\em ACM Trans. Algorithms}, 1(2):243–264, October 2005.
\newblock \doi{10.1145/1103963.1103966}.

\bibitem[AM78]{AllenMunro1978}
Brian Allen and Ian Munro.
\newblock Self-organizing binary search trees.
\newblock {\em Journal of the {ACM}}, 25(4):526--535, oct 1978.
\newblock \doi{10.1145/322092.322094}.

\bibitem[BCI{\etalchar{+}}20]{BoseCardinalEtAl2020}
Prosenjit Bose, Jean Cardinal, John Iacono, Grigorios Koumoutsos, and Stefan
  Langerman.
\newblock Competitive online search trees on trees.
\newblock In {\em Proceedings of the 2020 ACM-SIAM Symposium on Discrete
  Algorithms (SODA)}, pages 1878--1891, 2020.
\newblock \doi{10.1137/1.9781611975994.115}.

\bibitem[BGKK23]{BerendsohnEtAl2023a}
Benjamin~Aram Berendsohn, Ishay Golinsky, Haim Kaplan, and L\'{a}szl\'{o}
  Kozma.
\newblock {Fast Approximation of Search Trees on Trees with Centroid Trees}.
\newblock In Kousha Etessami, Uriel Feige, and Gabriele Puppis, editors, {\em
  50th International Colloquium on Automata, Languages, and Programming (ICALP
  2023)}, volume 261 of {\em Leibniz International Proceedings in Informatics
  (LIPIcs)}, pages 19:1--19:20, Dagstuhl, Germany, 2023. Schloss Dagstuhl --
  Leibniz-Zentrum f{\"u}r Informatik.
\newblock \doi{10.4230/LIPIcs.ICALP.2023.19}.

\bibitem[BK22]{BerendsohnKozma2022a}
Benjamin~Aram Berendsohn and László Kozma.
\newblock Splay trees on trees.
\newblock In {\em Proceedings of the 2022 Annual ACM-SIAM Symposium on Discrete
  Algorithms (SODA)}, pages 1875--1900, 2022.
\newblock \doi{10.1137/1.9781611977073.75}.

\bibitem[CFPI10]{CattaneoFaruoloEtAl2010}
G.~Cattaneo, P.~Faruolo, U.~Ferraro Petrillo, and G.F. Italiano.
\newblock Maintaining dynamic minimum spanning trees: An experimental study.
\newblock {\em Discrete Applied Mathematics}, 158(5):404--425, 2010.
\newblock \doi{10.1016/j.dam.2009.10.005}.

\bibitem[DBT96]{DiBattistaTamassia1996}
Giuseppe Di~Battista and Roberto Tamassia.
\newblock On-line planarity testing.
\newblock {\em SIAM Journal on Computing}, 25(5):956--997, 1996.
\newblock \doi{10.1137/S0097539794280736}.

\bibitem[DHIP07]{DemaineEtAl2007}
Erik~D. Demaine, Dion Harmon, John Iacono, and Mihai
  P\u{a}tra\textcommabelow{s}cu.
\newblock Dynamic optimality—almost.
\newblock {\em SIAM Journal on Computing}, 37(1):240--251, 2007.
\newblock \doi{10.1137/S0097539705447347}.

\bibitem[EGIN97]{EppsteinEtAl1997}
David Eppstein, Zvi Galil, Giuseppe~F. Italiano, and Amnon Nissenzweig.
\newblock Sparsification—a technique for speeding up dynamic graph
  algorithms.
\newblock {\em J. ACM}, 44(5):669--696, sep 1997.
\newblock \doi{10.1145/265910.265914}.

\bibitem[EIT{\etalchar{+}}92]{EppsteinEtAl1992}
David Eppstein, Giuseppe~F Italiano, Roberto Tamassia, Robert~E Tarjan, Jeffery
  Westbrook, and Moti Yung.
\newblock Maintenance of a minimum spanning forest in a dynamic plane graph.
\newblock {\em Journal of Algorithms}, 13(1):33--54, 1992.
\newblock \doi{10.1016/0196-6774(92)90004-V}.

\bibitem[Fre85]{Frederickson1985}
Greg~N. Frederickson.
\newblock Data structures for on-line updating of minimum spanning trees, with
  applications.
\newblock {\em SIAM Journal on Computing}, 14(4):781--798, 1985.
\newblock \doi{10.1137/0214055}.

\bibitem[GT88]{GoldbergTarjan1988}
Andrew~V. Goldberg and Robert~E. Tarjan.
\newblock A new approach to the maximum-flow problem.
\newblock {\em J. ACM}, 35(4):921--940, October 1988.
\newblock \doi{10.1145/48014.61051}.

\bibitem[HFZ{\etalchar{+}}20]{HuFeyEtAl2020}
Weihua Hu, Matthias Fey, Marinka Zitnik, Yuxiao Dong, Hongyu Ren, Bowen Liu,
  Michele Catasta, and Jure Leskovec.
\newblock Open graph benchmark: Datasets for machine learning on graphs, 2020,
  \burlalt{2005.00687}{http://arxiv.org/abs/2005.00687}.

\bibitem[HK99]{HenzingerKing1999}
Monika~R. Henzinger and Valerie King.
\newblock Randomized fully dynamic graph algorithms with polylogarithmic time
  per operation.
\newblock {\em J. ACM}, 46(4):502--516, July 1999.
\newblock \doi{10.1145/320211.320215}.

\bibitem[HRR23]{HolmEtAl2023}
Jacob Holm, Eva Rotenberg, and Alice Ryhl.
\newblock Splay top trees.
\newblock In {\em Symposium on Simplicity in Algorithms ({SOSA})}, pages
  305--331. Society for Industrial and Applied Mathematics, January 2023.
\newblock \doi{10.1137/1.9781611977585.ch28}.

\bibitem[HT84]{HarelTarjan1984}
Dov Harel and Robert~Endre Tarjan.
\newblock Fast algorithms for finding nearest common ancestors.
\newblock {\em SIAM Journal on Computing}, 13(2):338--355, 1984.
\newblock \doi{10.1137/0213024}.

\bibitem[Kar00]{Karger2000}
David~R. Karger.
\newblock Minimum cuts in near-linear time.
\newblock {\em J. ACM}, 47(1):46--76, jan 2000.
\newblock \doi{10.1145/331605.331608}.

\bibitem[KMT03]{KaplanEtAl2003}
Haim Kaplan, Eyal Molad, and Robert~E Tarjan.
\newblock Dynamic rectangular intersection with priorities.
\newblock In {\em Proceedings of the thirty-fifth annual ACM symposium on
  Theory of computing}, pages 639--648, 2003.

\bibitem[KN13]{KaplanNussbaum2013}
Haim Kaplan and Yahav Nussbaum.
\newblock Min-cost flow duality in planar networks, 2013,
  \burlalt{1306.6728}{http://arxiv.org/abs/1306.6728}.

\bibitem[Knu71]{Knuth1971}
D.~E. Knuth.
\newblock Optimum binary search trees.
\newblock {\em Acta Informatica}, 1(1):14--25, March 1971.
\newblock \doi{10.1007/BF00264289}.

\bibitem[Luc88]{Lucas1988}
J.M. Lucas.
\newblock Canonical forms for competitive binary search tree algorithms.
\newblock Technical report DCS-TR-250, Department of Computer Science, Hill
  Center for the Mathematical Sciences, Busch Campus, Rutgers University, New
  Brunswick, New Jersey 08903, 1988.

\bibitem[Meh75]{Mehlhorn1975}
Kurt Mehlhorn.
\newblock Nearly optimal binary search trees.
\newblock {\em Acta Informatica}, 5(4):287--295, Dec 1975.
\newblock \doi{10.1007/BF00264563}.

\bibitem[Meh77]{Mehlhorn1977}
Kurt Mehlhorn.
\newblock A best possible bound for the weighted path length of binary search
  trees.
\newblock {\em SIAM Journal on Computing}, 6(2):235--239, 1977.
\newblock \doi{10.1137/0206017}.

\bibitem[Mun00]{Munro2000}
J.~Ian Munro.
\newblock On the competitiveness of linear search.
\newblock In Mike~S. Paterson, editor, {\em Algorithms - ESA 2000}, pages
  338--345, Berlin, Heidelberg, 2000. Springer Berlin Heidelberg.
\newblock \doi{10.1007/3-540-45253-2\_31}.

\bibitem[ST83]{SleatorTarjan1983}
Daniel~D. Sleator and Robert~Endre Tarjan.
\newblock A data structure for dynamic trees.
\newblock {\em Journal of Computer and System Sciences}, 26(3):362--391, 1983.
\newblock \doi{10.1016/0022-0000(83)90006-5}.

\bibitem[ST85]{SleatorTarjan1985}
D.~Sleator and R.~Tarjan.
\newblock Self-adjusting binary search trees.
\newblock {\em J. ACM}, 32:652--686, 1985.
\newblock \doi{10.1145/3828.3835}.

\bibitem[Tar85]{Tarjan1985}
Robert~Endre Tarjan.
\newblock Amortized computational complexity.
\newblock {\em {SIAM} Journal on Algebraic Discrete Methods}, 6(2):306--318,
  April 1985.
\newblock \doi{10.1137/0606031}.

\bibitem[Tar97]{Tarjan1997}
Robert~E. Tarjan.
\newblock Dynamic trees as search trees via euler tours, applied to the network
  simplex algorithm.
\newblock {\em Mathematical Programming}, 78(2):169--177, August 1997.
\newblock \doi{10.1007/bf02614369}.

\bibitem[TW05]{TarjanWerneck2005}
Robert~E. Tarjan and Renato~F. Werneck.
\newblock Self-adjusting top trees.
\newblock In {\em Proceedings of the Sixteenth Annual ACM-SIAM Symposium on
  Discrete Algorithms}, SODA '05, pages 813--822, USA, 2005. Society for
  Industrial and Applied Mathematics.

\bibitem[TW10]{TarjanWerneck2010}
Robert~E. Tarjan and Renato~F. Werneck.
\newblock Dynamic trees in practice.
\newblock {\em ACM J. Exp. Algorithmics}, 14, January 2010.
\newblock \doi{10.1145/1498698.1594231}.

\bibitem[Wer06]{Werneck2006}
Renato~F. Werneck.
\newblock {\em Design and Analysis of Data Structures for Dynamic Trees}.
\newblock PhD thesis, Princeton University, 2006.

\end{thebibliography}

\clearpage

\appendix

\section{Unique representation of the underlying tree}\label{sec:represent-proof}

In this section, we show that the representation of STTs presented in \cref{sec:2cut-impl} is sufficient to uniquely represent the underlying tree. Recall that we store the pointers $\tparent$, $\tdsepchild$, and $\tisepchild$. The parent pointers tell us the structure of the tree, and the child pointers tell us precisely which nodes are direct or indirect separators. We first show that this uniquely determines the boundaries of subtrees.

\begin{lemma}
	Given an STT $T$ and the pointers $\tparent$, $\tdsepchild$, $\tisepchild$ for each node, we can determine $\bd(T_v)$ for each node $v$.
\end{lemma}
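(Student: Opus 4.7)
The plan is to show the claim by induction on the depth of $v$ in $T$, working top-down from the root and using the characterization of separator types encoded by $\tdsepchild$ and $\tisepchild$. The base case is immediate: if $v$ is the root, then $V(T_v) = V(G)$, so $\bd(T_v) = \emptyset$.

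For the inductive step, let $v$ be a non-root node with parent $p$, and assume $\bd(T_p)$ has already been computed. I would split into three cases based on the pointers at $p$. First, if $v$ is neither $\tdsepchild(p)$ nor $\tisepchild(p)$, then $v$ is 1-cut, so by \cref{p:boundary-rules}(i) we conclude $\bd(T_v) = \{p\}$. Second, if $v = \tdsepchild(p)$, then by definition $v$ is a direct separator, so $\bd(T_v) = \{p, \tparent(p)\}$ directly. Third, if $v = \tisepchild(p)$, then $v$ is an indirect separator, so $\bd(T_v) = \{p, a\}$ for some proper ancestor $a$ of $p$ with $a \ne \tparent(p)$; by \cref{p:boundary-rules}(ii) we have $a \in \bd(T_p)$, and since $\tparent(p) \in \bd(T_p)$ and $|\bd(T_p)| \le 2$, the element $a$ is forced to be the unique element of $\bd(T_p) \setminus \{\tparent(p)\}$, which is available by the inductive hypothesis.

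The only subtle point is the third case, where one must verify that $|\bd(T_p)| = 2$ so that the set $\bd(T_p) \setminus \{\tparent(p)\}$ is non-empty. This follows from the same inclusion: if $p$ were 1-cut, then $\bd(T_p) = \{\tparent(p)\}$, which combined with $\bd(T_v) \subseteq \bd(T_p) \cup \{p\}$ would force $\bd(T_v) \subseteq \{p, \tparent(p)\}$, contradicting that $v$ is an \emph{indirect} separator. So the definition of indirect separator child automatically implies the needed richness of $\bd(T_p)$. The main obstacle is just making this case analysis airtight; once that is done, the induction runs cleanly and produces an explicit linear-time procedure to recover all boundaries from the stored pointers.
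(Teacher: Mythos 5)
Your proof is correct and takes essentially the same top-down approach as the paper: base case at the root, then a case split on whether $v$ is a non-separator, direct separator, or indirect separator child of $p$, with the indirect case resolved by identifying the second boundary element as the unique member of $\bd(T_p)\setminus\{\tparent(p)\}$. The only difference is that you explicitly verify $|\bd(T_p)|=2$ in the indirect case, which the paper leaves implicit; that is a harmless (and arguably welcome) tightening.
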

\begin{proof}
	For the root $r$, we always have $\bd(T_r) = \emptyset$. If $v$ is not a separator and not the root, then $\bd(T_v)$ contains only the parent of $v$. If $v$ is a direct separator, then $\bd(T_v)$ consists of the parent and grandparent of $v$.
	
	Now consider an indirect separator node $v$ with parent $p$ and grandparent $g$. \Cref{p:boundary-rules} implies that $\bd(T_v) = \{p,x\}$, where $x \in \bd(T_p)$. Since $v$ is an indirect separator node, $x \neq g$. But $g \in \bd(T_p)$, so $x$ must be the remaining node in $\bd(T_p) \setminus g$. This observation allows us to determine all subtree boundaries in a top-down fashion.
\end{proof}

Once we have determined subtree boundaries, we can determine the edges of the underlying tree using the following lemma.

\begin{lemma}
	Let $T$ be a search tree on a tree $G$. Let $u, v \in V(T)$ such that $u$ is an ancestor of $v$. Then $\{u,v\} \in E(G)$ if and only if $u \in \bd(T_v)$, but $u \notin \bd(T_c)$ for each child $c$ of $v$.
\end{lemma}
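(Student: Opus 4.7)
The plan is to prove each direction separately, leveraging the two characterizing properties of search trees on graphs (each edge of $G$ goes between an ancestor and a descendant, and each $G[V(T_v)]$ is connected) together with the fact that $G$ is a tree.

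For the forward direction, suppose $\{u,v\} \in E(G)$. Since $u$ is a strict ancestor of $v$, we have $u \notin V(T_v)$, and $u$ is adjacent to $v \in V(T_v)$, so immediately $u \in \bd(T_v)$. The nontrivial part is to show that $u \notin \bd(T_c)$ for every child $c$ of $v$. I would argue by contradiction: assume $u \in \bd(T_c)$ for some child $c$ of $v$, so there exists $w \in V(T_c)$ with $\{u,w\} \in E(G)$. Since $V(T_c) \subseteq V(T_v)$ and $G[V(T_v)]$ is connected, there is a path $P$ in $G$ from $w$ to $v$ contained in $V(T_v)$. Combined with the two edges $\{u,v\}$ and $\{u,w\}$, this yields a closed walk $u \to v \to P \to w \to u$, which is actually a cycle because $u \notin V(T_v)$ ensures $u$ does not appear on $P$, and $v \neq w$ (as $w \in V(T_c)$ while $v \notin V(T_c)$). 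This contradicts the assumption that $G$ is a tree.

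For the backward direction, suppose $u \in \bd(T_v)$. Then by definition there is some $y \in V(T_v)$ with $\{u,y\} \in E(G)$. Either $y = v$, in which case $\{u,v\} \in E(G)$ and we are done, or $y \neq v$, in which case $y$ lies in $V(T_c)$ for some child $c$ of $v$; but then $y \in V(T_c)$ together with $\{u,y\} \in E(G)$ and $u \notin V(T_c)$ yields $u \in \bd(T_c)$, contradicting the hypothesis. Hence $y = v$, as required.

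The only delicate step is the cycle argument in the forward direction: one must verify that the walk closes into a genuine simple cycle. Both conditions needed for this (that $u$ is disjoint from the internal vertices of $P$, and that $v \neq w$) follow cleanly from $u \notin V(T_v)$ and the fact that $v$ is not a descendant of itself, so this obstacle is easily overcome.
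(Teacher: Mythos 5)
Your proof is correct and uses essentially the same idea as the paper's: the paper first observes that, since $G$ is a tree and $G[V(T_v)]$ is connected, there is exactly one edge between $u$ and $V(T_v)$, and then does a case analysis on where its endpoint lies; your cycle argument in the forward direction is just that uniqueness observation unfolded, and your backward direction is the same case analysis. The two writeups are equivalent in substance, differing only in organization.
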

\begin{proof}
	If $u \notin \bd(T_v)$, then there is no edge in $G$ between $u$ and $V(T_v)$, so, in particular, $\{u,v\} \notin E(G)$. Now suppose $u \in \bd(T_v)$. Since $G$ is a tree and $G[V(T_v)]$ is connected, there must be exactly one edge $\{u,x\}$ between $u$ and $V(T_v)$. If $u \in \bd(T_c)$ for some child $c$ of $v$, then $x \in V(T_c)$, so $\{u,v\} \notin E(G)$. Otherwise, we have $x \notin V(T-c)$ for each child $c$ of $v$, and hence $x = v$.
\end{proof}

\section{Implementing STT rotations}\label{sec:rot-impl}

\begin{lemma}\label{p:impl-rot}
	Given a node $v$ in an STT $T$, represented as described in \cref{sec:2cut-impl}, we can rotate $v$ with its parent in $\fO(1)$ time.
\end{lemma}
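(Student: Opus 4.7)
The plan is to give an explicit constant-time update procedure. The observation that makes $O(1)$ time possible is that a rotation of $v$ with $p$ changes only a constant number of pointers: the parent pointers of $v$, $p$, and the direct separator child $c = \tdsepchild(v)$ (if it exists; every other child of $v$ keeps its parent, and the children of $p$ other than $v$ keep their parent as well), together with the $\tdsepchild$ and $\tisepchild$ fields of the old grandparent $g = \tparent(p)$, of $v$, and of $p$. By \cref{p:sep-nodes} each of these nodes has at most one direct and one indirect separator child, so the number of pointer fields we need to rewrite is bounded.

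\textbf{Computing the new boundaries.} Before touching any pointer, I would compute $\bd(T'_v)$ and $\bd(T'_p)$ from the information stored in $T$. The easy one is $V(T'_v) = V(T_p)$, so $\bd(T'_v) = \bd(T_p)$, which we read off from the separator status of $p$ in $T$. For $\bd(T'_p)$ I would argue that $v \in \bd(T'_p)$ always (since the edge realizing $p \in \bd(T_v)$ now crosses out of $T'_p$), and that any further boundary vertex must come from $\bd(T_p) \setminus \bd(T_v)$, because in the tree $G$ each outside vertex has at most one edge into $V(T_p)$; if that edge landed in $V(T_v) \setminus V(T_c)$, it no longer has an endpoint in $V(T'_p)$. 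By \cref{p:Steiner-closed-three-path} the relevant boundary vertices of $v$ and $p$ all lie on a single path, which reduces the comparison to equality checks between $g$, the ``second'' boundary vertex of $v$ (if $v$ is a separator), and the ``second'' boundary vertex of $p$ (if $p$ is an indirect separator), all of which we can read in $O(1)$ from the $\tdsepchild, \tisepchild, \tparent$ pointers.

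\textbf{Updating the pointers.} Once the new boundaries are known, I would rewrite the pointers in three blocks. (i) Parent pointers: $\tparent(v) \gets g$, $\tparent(p) \gets v$, and, if $c$ exists, $\tparent(c) \gets p$. (ii) Separator roles of $c$ and of the indirect separator child $d_v = \tisepchild(v)$ in $T$: since $\bd(T_c) = \{v, p\}$ is unchanged and $v$ is now the grandparent of $c$, the node $c$ becomes $\tdsepchild(p)$ in $T'$; the node $d_v$, whose second boundary vertex $a_v$ is either $g$ (when $v$ was a direct separator in $T$) or a higher ancestor, becomes $\tdsepchild(v)$ or remains $\tisepchild(v)$ respectively. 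Analogous case analysis handles the old $\tdsepchild(p), \tisepchild(p)$ (note $v$ itself could be one of those) using the second boundary vertex of $p$. (iii) Finally, update $\tdsepchild(g), \tisepchild(g)$ to reflect whether $v$ is now direct or indirect separator of $g$, using $\bd(T'_v) = \bd(T_p)$ and the position of $g$'s parent.

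\textbf{Main obstacle.} The routine is conceptually straightforward, but the main difficulty is bookkeeping: $v$ and $p$ may each independently be 1-cut, direct separator, or indirect separator, and both the old and new separator roles of $v, p, c, d_v, d_p$ have to be classified consistently. The saving grace is \cref{p:Steiner-closed-three-path} and \cref{p:sep-nodes}, which collapse this case analysis to a handful of equality tests between $g$, the single ``extra'' boundary vertex of $v$, and the single ``extra'' boundary vertex of $p$; once these tests are performed, the pointer assignments are mechanical and the total running time is $O(1)$.
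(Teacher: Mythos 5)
Your overall approach matches the paper's: enumerate which of the $O(1)$ pointer fields can change, determine the new values of each via a case analysis that leans on \cref{p:sep-nodes} (uniqueness of separator children) and \cref{p:Steiner-closed-three-path} (the three-on-a-path lemma), and observe that the whole routine runs in constant time. The organization is slightly different --- you propose to compute the new boundaries $\bd(T'_v)$ and $\bd(T'_p)$ first and then derive the pointer updates, while the paper directly argues the new pointer values node by node --- but that is a cosmetic difference.

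However, there is a genuine gap: your enumeration of the affected fields misses the child pointers of $c = \tdsepchild(v)$. Since $c$ swaps its parent (from $v$ to $p$) and grandparent (from $p$ to $v$), the roles of $c$'s own separator children flip. Concretely, before the rotation $\bd(T_c) = \{v,p\}$, so by \cref{p:boundary-rules} a separator child $x$ of $c$ has $\bd(T_x) \in \{\{c,v\},\{c,p\}\}$; the set $\{c,v\}$ is ``parent and grandparent'' in $T$ (direct) but ``parent and great-grandparent'' in $T'$ (indirect), and symmetrically for $\{c,p\}$. Hence one must also set $\tdsepchild'(c) \gets \tisepchild(c)$ and $\tisepchild'(c) \gets \tdsepchild(c)$, which your update procedure never does; an implementation following your outline would leave $c$'s children misclassified. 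Adding this swap closes the gap and the rest of the argument goes through.
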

\begin{proof}
	Let $p = \tparent(v)$, let $g = \tparent(p)$, and let $c = \tdsepchild(v)$ ($g$ and/or $c$ may be $\bot$). We denote by $T'$ the tree after the rotation, and by $\tparent'(\cdot)$, $\tdsepchild'(\cdot)$, $\tisepchild'(\cdot)$ the correct respective pointers in $T'$. In the following, we frequently make use of the fact that each node has at most one direct and at most one indirect separator child (\cref{p:sep-nodes}).
	
	For the parent pointers, we have $\tparent'(v) = g$, $\tparent'(p) = v$, and, if $c \neq \bot$, additionally $\tparent'(c) = p$.
	
	If $g \neq \bot$, we may need to adjust its child pointers. Observe that $V(T'_v) = V(T_p)$, so $\bd(T'_v) = \bd(T_p)$. Thus, $v$ is an (in)direct separator child in $T'$ if and only if $p$ was an (in)direct separator child in $T'$. One of $\tdsepchild'(g)$ and $\tisepchild'(g)$ may accordingly change from $p$ to $v$.
	
	We now consider the child pointers of $p$. Note that $p$ gains a new parent ($v$) and keeps all other ancestors, loses a child ($v$) and possibly gains a child ($c$). First, we have $\tdsepchild'(p) = c$, since $c$ is the unique node with $\bd(T_c) = \{v,p\}$ if such a node exists; otherwise, $\tdsepchild'(p) = \bot = c$.
	
	If $p$ has a separator child $x \neq v$ in $T$, then $x$ clearly still is a separator in $T'$, and $v \notin \bd(T_x) = \bd(T'_x)$, so $\tisepchild'(p) = x$. If $x$ does not exist, then $p$ has no separator child in $T$. Since $p$ does not gain children in $T'$ besides $c$, this means $\tisepchild'(p) = \bot$.
	
	\begin{figure}[t]
		\centering
		\begin{tikzpicture}[xscale=0.7]
		\node at (0,-1) {};
		\node at (0,1) {};
		
		\node[vertex] (p) at (0,0) {};
		\node[vertex] (v) at (1,0) {};
		\node[vertex] (x) at (2,0) {};
		\node[vertex] (a) at (3,0) {};
		
		\node[below] at (p) {$\strut p$};
		\node[below] at (v) {$\strut v$};
		\node[below] at (x) {$\strut x$};
		\node[below] at (a) {$\strut g$};
		
		\draw[nice dash] (p) -- (v) -- (x) -- (a);
		\end{tikzpicture}
		\hspace{20mm}
		\begin{tikzpicture}[xscale=0.5, yscale=0.5]
		\begin{scope}[local bounding box=T1]
		\node[vertex] (g) at (4,4) {};
		\node[vertex] (p) at (1,3) {};
		\node[vertex] (v) at (2,2) {};
		\node[vertex] (x) at (3,1) {};
		\draw (g) -- (p) -- (v) -- (x);
		\node[right] at (g) {$g$};
		\node[below left] at (p) {$p$};
		\node[below left] at (v) {$v$};
		\node[below left] at (x) {$x$};
		\end{scope}
		\begin{scope}[shift={(6,0.5)}, local bounding box=T2]
		\node[vertex] (g) at (4,3) {};
		\node[vertex] (p) at (1,1) {};
		\node[vertex] (v) at (2,2) {};
		\node[vertex] (x) at (3,1) {};
		\draw (g) -- (v) -- (p);
		\draw (v) -- (x);
		\node[right] at (g) {$g$};
		\node[below left] at (p) {$p$};
		\node[above left] at (v) {$v$};
		\node[below right] at (x) {$x$};
		\end{scope}
		\draw[-latex] (5,2.5) -- (6,2.5);
		\end{tikzpicture}
		\caption{The situation in \cref{p:impl-rot} if $v$ is a direct separator with a direct separator child $x$ in~$T$. Left: The path in the underlying graph. Right: The rotation at $v$.}\label{fig:rot-spec-1}
	\end{figure}
	
	Now consider the child pointers of $v$. Note that $v$ loses its parent and keeps all other ancestors, gains a child ($p$) and possibly loses a child ($c$). If $g = \bot$, then $v$ is the root of $T'$, and thus $\tdsepchild'(v) = \tisepchild'(v) = \bot$. Suppose $g \neq \bot$. Since $T$ is 2-cut, $v, p, g$ lie on a common path (using \cref{p:Steiner-closed-three-path}). The vertex $g$ cannot lie between $v$ and $p$ on this path, otherwise $v$ and $p$ would be in different subtrees of $g$ in~$T$.
	\begin{itemize}
		\item If $v$ is a direct separator in $T$, then $\bd(T_v) = \{p,g\}$ and $v$ lies on the path between $p$ and $g$. We claim that $\tdsepchild'(v) = \tisepchild(v)$. Indeed, if $x = \tdsepchild'(v) \neq \bot$, then $\bd(T'_x) = \{v,g\}$. Since $v$ is on the path between $p$ and $g$, we have $x \neq p$ (see \cref{fig:rot-spec-1}), so $x$ must already have been a child of $v$ in $T$, and $\bd(T_x) = \bd(T'_x) = \{v,g\}$, so $x = \tisepchild(v)$. Conversely, if $y = \tisepchild(v) \neq \bot$, then $y$ is still a child of $v$ in $T'$, so $\bd(T'_y) = \bd(T_y) \subseteq \bd(T_v) \cup \{v\} \setminus \{p\} = \{v,g\}$, implying that $\tdsepchild'(v) = y$.
		
		To determine $\tisepchild'(v)$, consider the following two cases.
		\begin{itemize}
			\item $p$ was a separator node in $T$. Then $\bd(T_p) = \{g, a\}$, where $a$ is some ancestor of $g$. \Cref{p:Steiner-closed-three-path} implies that $p$ is on the path between $g$ and $a$. Since $v$ lies between $g$ and $p$, the underlying tree has a path containing $a, p, v, g$, in that order (see \cref{fig:rot-spec-2}). Hence, $p$ lies on the path between $a$ and $v$, so $\bd(T'_p) = \{v,a\}$ and thus $\tisepchild'(v) = p$.
			\item $p$ was a 1-cut node in $T$. We claim that then $\tisepchild'(v) = \bot$. Suppose otherwise that $x = \tisepchild'(v) \neq \bot$. Then $\bd(T'_x) = \{v,a\}$, where $a$ is some ancestor of $g$. This implies $a \in \bd(T'_v) = \bd(T_p)$. But then $\bd(T_p) = \{g,a\}$, contradicting the assumption.
		\end{itemize}
		
		\item If $v$ is not a direct separator in $T$, then $p$ is on the path between $v$ and $g$. Thus, $\tdsepchild'(v) = p$. If $x = \tisepchild(v) \neq \bot$, then $x$ is still a separator child of $v$ (by definition). Since $v$ only gains $p$ as a child, no other nodes can become the indirect separator child of $v$. Thus, $\tisepchild'(v) = \tisepchild(v)$.
	\end{itemize}
	
	Finally, consider a separator child $x$ of $c$. Since $c$ swapped parent ($v$) and grandparent ($p$), if $x$ was a direct separator in $T$, it is an indirect separator in $T'$, and vice versa. Hence, we have $\tisepchild'(c) = \tdsepchild(c)$ and $\tdsepchild'(c) = \tisepchild(c)$.
	
	All nodes other than $v$, $p$, $g$, $c$ do not gain or lose children and do not change parent, hence their pointers are the same in $T$ and $T'$. Implementing a rotation procedure based on the observations above is straight-forward. (See \implfile{stt/src/twocut/basic.rs} in the source code.)
\end{proof}

\begin{figure}[t]
	\centering
	\begin{tikzpicture}[xscale=0.7]
		\node at (0,-1) {};
		\node at (0,1) {};
		
		\node[vertex] (a) at (0,0) {};
		\node[vertex] (p) at (1,0) {};
		\node[vertex] (v) at (2,0) {};
		\node[vertex] (g) at (3,0) {};
	
		\node[below] at (a) {$\strut a$};
		\node[below] at (p) {$\strut p$};
		\node[below] at (v) {$\strut v$};
		\node[below] at (g) {$\strut g$};
		
		\draw[nice dash] (a) -- (p) -- (v) -- (g);
	\end{tikzpicture}
	\hspace{20mm}
	\begin{tikzpicture}[xscale=0.5, yscale=0.5]
		\begin{scope}[local bounding box=T1]
			\node[vertex] (a) at (1,4) {};
			\node[vertex] (g) at (4,3) {};
			\node[vertex] (p) at (2,2) {};
			\node[vertex] (v) at (3,1) {};
			\draw[nice dash] (a) -- (g);
			\draw (g) -- (p) -- (v);
			\node[below left] at (a) {$a$};
			\node[right] at (g) {$g$};
			\node[below left] at (p) {$p$};
			\node[below left] at (v) {$v$};
		\end{scope}
		\begin{scope}[shift={(6,0)}, local bounding box=T2]
			\node[vertex] (a) at (1,4) {};
			\node[vertex] (g) at (4,3) {};
			\node[vertex] (p) at (2,1) {};
			\node[vertex] (v) at (3,2) {};
			\draw[nice dash] (a) -- (g);
			\draw (g) -- (v) -- (p);
			\node[below left] at (a) {$a$};
			\node[right] at (g) {$g$};
			\node[below right] at (p) {$p$};
			\node[below right] at (v) {$v$};
		\end{scope}
		\draw[-latex] (5,2.5) -- (6,2.5);
	\end{tikzpicture}
	\caption{The situation in \cref{p:impl-rot} and \cref{sec:monoid-weights} if $v$ is a direct separator and $p$ is a separator in $T$. Left: The path in the underlying graph. Right: The rotation at $v$.}\label{fig:rot-spec-2}
\end{figure}

\section{Monoid edge weights}\label{sec:monoid-weights}

In this section, we show how to maintain more general \emph{monoid} weights. The difference between groups and monoids is that not every monoid element has an inverse. Thus, subtraction is not possible, so the approach from \cref{sec:edge-weights} cannot be used.

Let $T$ be a 2-cut forest on a tree $G$ with edge weights from a commutative monoid $(W,+)$. Again, let $d(u,v)$ denote the weight of the path between two nodes $u$ and $v$ (i.e., the \emph{distance} between $u$ and $v$), and let $d(u,v) = \infty$ if $u = \bot$ or $v = \bot$. We now need two fields for every node.
\begin{itemize}
	\item $\tpdist(v)$ is the distance between $v$ and its parent, or $\infty$ if $v$ is the root.
	\item $\tadist(v)$ is $\infty$ if $v$ is 1-cut. If $v$ is a separator, then $\tadist(v)$ is the distance between $v$ and the node $x \in \bd(T_v)$ that is \emph{not} the parent of $v$.
\end{itemize}

Together, $\tpdist(v)$ and $\tadist(v)$ store the distance of $v$ to each node $x \in \bd(T_v)$. We now show how to maintain both fields under rotations.

Consider a rotation of $v$ with its parent $p$. Let $T$, $T'$ be the search forest before and after the rotation. Let $c$ be the direct separator child of $v$ in $T$, or $\bot$ if $v$ has no direct separator child. We denote by $\tpdist(\cdot), \tadist(\cdot)$ and $\tpdist'(\cdot), \tadist'(\cdot)$ the values before and after the rotation.

Suppose $c$ exists. Then the rotation exchanges parent and grandparent of $c$, both of which are in $\bd(T_c) = \bd(T'_c)$. Hence, we can simply swap $\tpdist(c)$ and $\tadist(c)$.

As in the group setting, we always have $\tpdist'(p) = d(p,v) = \tpdist(v)$. If $p$ is the root of $T$, then $v$ is the root of $T'$. We then have $\tpdist'(v) = \tadist'(v) = \tadist'(p) = \infty$.

We now turn to the more interesting case where $p$ is not the root. Then, $p$ has a parent $g$. Let $a$ be the other node in $\bd(T_p)$ if $p$ is a separator, or $\bot$ otherwise. We have $\tpdist'(v) = d(v,g)$ and $\tadist'(v) = d(v,a)$.

\Cref{p:Steiner-closed-three-path} implies that $v,p,g$ lie on a common path.
\begin{itemize}
	\item If $v$ is between $p$ and $g$ on the path, then $v$ is a direct separator in $T$ (by definition). We have $\tpdist'(v) = d(v,g) = \tadist(v)$.
	\begin{itemize}
		\item If $p$ is a separator in $T$, then $p$ lies on a path between $g$ and $a$ in $G$, which means $G$ has a path along $a, p, v, g$, in that order (see \cref{fig:rot-spec-2}). Thus, $\tadist'(p) = d(p,a) = \tadist(p)$ and $\tadist'(v) = d(v,a) = d(v,p) + d(p,a) = \tpdist(v) + \tadist(p)$.
		\item If $p$ is not a separator in $T$, then $a = \bot$, so $\tadist'(v) = d(v,a) = \infty$. We claim that $p$ is not a separator in $T'$ either, so $\tadist'(p) = \infty$. Indeed, if $p$ is a separator in $T'$, then $\bd(T'_p) \subseteq \{v\} \cup \bd(T'_v) = \{v,g\}$ (since $\bd(T'_v) = \bd(T_p) = \{g\}$). But this contradicts the assumption that $v$ is on the path between $p$ and $g$.
	\end{itemize}
	\item If $p$ is between $v$ and $g$ on the path, then $\tpdist'(v) = d(v,g) = d(v,p) + d(p,g) = \tpdist(v) + \tpdist(p)$ as in the group case. Also, $\bd(T'_p) = \{v,g\}$, so $\tadist'(p) = d(p,g) = \tpdist(p)$.
	
	It remains to compute $\tadist'(v) = d(v,a)$. If $p$ is not a separator in $T$, then $a = \bot$ and thus $\tadist'(v) = \infty$. If $p$ is a separator in $T$, then $v$ must also be a separator in $T$ (otherwise, the rotation is not valid by \cref{p:rot-allowed}). We have $\bd(T_v) \subseteq \{p\} \cup \bd(T_p) = \{p,g,a\}$ by \cref{p:boundary-rules}. Since $p$ is between $v$ and $g$ by assumption, we have $g \notin \bd(T_v)$, and thus $\bd(T_v) = \{p,a\}$, implying that $\tadist'(v) = \tadist(v)$.
	\item $g$ cannot be between $v$ and $p$, since then $v$ and $p$ needed to be in different subtrees of $g$.
\end{itemize}

\section{Stability}\label{sec:stability}

In this section, we argue that \algMRT{} and the three SplayTT variants are stable. Given an STT $T$ and a node $v \in V(T)$, a node $x \in V(T)$, we define the property $P(T,v,x)$ as satisfied if
\begin{itemize}
	\item all nodes on the root path of $x$ are 1-cut; and
	\item the root paths of $v$ and $x$ only intersect at the root of $T$.
\end{itemize}

\begin{lemma}\label{p:stable-rotation}
	Let $T$ be an STT, let $v, x \in V(T)$, and let $T'$ be the result of rotating at a node $u$ such that
	\begin{enumerate}[(i)]
		\item $u$ is on the root path of $v$; and
		\item if $u$ is a child of the root and $u \neq v$, then the child of $u$ on the root path of $v$ is 1-cut.
	\end{enumerate}
	
	Then, $P(T,v,x)$ implies $P(T',v,x)$.
\end{lemma}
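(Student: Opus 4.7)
The plan is to track the rotation at $u$ explicitly and verify both clauses of $P(T',v,x)$ via a case split on whether $p := \tparent(u)$ is the root of $T$. Two global observations drive everything. First, since $u$ lies on the root path of $v$ and $u$ is not the root of $T$ (it has a parent), the intersection clause of $P(T,v,x)$ forces $u$ to not lie on the root path of $x$; equivalently, $x \notin V(T_u)$. The same argument applied to $p$ shows that either $p$ is the root of $T$ or $x \notin V(T_p)$. Second, a single rotation at $u$ leaves $V(T_y) = V(T'_y)$ (hence $\bd(T_y) = \bd(T'_y)$, hence the $1$-cut status) unchanged for every $y \notin \{u, p, c\}$, where $c$ denotes the direct separator child of $u$ if one exists; $c$ itself simply swaps the roles of its parent and grandparent.

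Case 1 ($p$ is not the root). Here $x \notin V(T_p)$, so the root path of $x$ in $T'$ coincides with the root path of $x$ in $T$, and every node on it retains its subtree and hence remains $1$-cut; clause (a) of $P(T',v,x)$ is immediate, and the root of $T'$ still equals the root of $T$. For clause (b), I would observe that the vertex set of the root path of $v$ in $T'$ is either the old vertex set with $p$ removed (when $v \neq c$) or the old vertex set unchanged (when $v = c$, since only the order of $u$ and $p$ swaps in that path); in either sub-case the new intersection with $x$'s root path is contained in the old one, namely $\{\mathrm{root}\}$.

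Case 2 ($p$ is the root). Now $u$ becomes the root of $T'$, and since $x \notin V(T_u)$ we have $x = p$ or $x$ lies in the subtree of some sibling of $u$ in $T$. The root path of $x$ in $T'$ is obtained from the old one by appending $u$. For clause (a), every old node on this path other than $p$ lies outside $\{u,p,c\}$ and so keeps its subtree and its $1$-cut status; $u$ is the new root, hence $0$-cut; and $p$ is now $1$-cut, since its only strict ancestor in $T'$ is $u$ and $u \in \bd(T'_p)$ by \cref{p:boundary-rules}(i). For clause (b), if $v = u$ the root path of $v$ in $T'$ is just $\{u\}$ and we are done. Otherwise let $w$ be the child of $u$ on the root path of $v$; condition (ii) says $|\bd(T_w)| = 1$. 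The only child of $u$ moved to $p$ by the rotation is the direct separator $c$, which satisfies $|\bd(T_c)| = 2$, so $w \neq c$. Hence $w$ remains a child of $u$ in $T'$, and the root path of $v$ in $T'$ lies entirely in $V(T_w) \cup \{u\} \subseteq V(T_u) \cup \{u\}$, meeting $x$'s new root path only at $u$.

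The only genuinely delicate step, and where condition (ii) is indispensable, is ruling out that $v$ lies in the subtree of the moved direct separator in Case 2; otherwise $p$ would appear on the new root path of $v$ and spoil the intersection condition. Everything else is mechanical bookkeeping on subtree boundaries, exploiting that only $u$, $p$, and $c$ have their subtrees altered by a single rotation.
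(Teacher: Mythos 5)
Your proof is correct and follows the same case split as the paper (on whether $\tparent(u)$ is the root of $T$), with the same key use of condition~(ii) to conclude that the child of $u$ on $v$'s root path cannot be the moved direct separator child, so $p$ drops off $v$'s root path. One small imprecision in your Case~1: the vertex set of $v$'s root path is unchanged whenever $v$ is \emph{any} descendant of the direct separator child $c$ (not only when $v = c$), but this does not affect the conclusion since in every sub-case the new root path of $v$ is contained in the old one.
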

\begin{proof}
	Let $p$ be the parent of $u$ in $T$. First, suppose that $p$ is not the root. Then $P(T,v,x)$ implies that $p$ is not on the root path of $x$. Hence, rotating at $u$ may remove $p$ from the root path of $v$, but does not change the root path of $x$. Thus, $P(T',v,x)$ holds.
	
	Now suppose that $p$ is the root of $T$. Then rotating at $u$ adds $u$ to the root path of~$x$. We have $\bd(T'_u) = 0$, $\bd(T'_p) = 1$, and $\bd(T'_y) = \bd(T_y) = 1$ for every other node $y$ on the root path of $x$. This proves the first part of $P(T',v,x)$. In order to prove the second part, let $v'$ be the child of $u$ on the root path of $v$. Assumption (ii) implies that $v'$ is 1-cut, so rotating at $u$ does not change the parent of $v'$, and thus rotating at $u$ removes $p$ from the root path of $v$. Hence, the second part of $P(T',v,x)$ holds.
\end{proof}

We now argue that all four algorithms only perform rotations satisfying \cref{p:stable-rotation}. Clearly, all rotations are performed on the root path of $v$. To see that (ii) holds, we need to consider the algorithms in more detail. We are only interested in rotations at a node $u \neq v$ that is the child of the root. Such a rotation only happens in the following circumstances.
\begin{itemize}
	\item The second-to-last rotation in \algMRT, when we rotate at the parent of $v$. This only happens if $v$ is 1-cut, and thus satisfies (ii).
	\item A ZIG-ZIG step at a grandchild $u'$ of the root $r$, in any of the SplayTT variants. A ZIG-ZIG step only happens if $u', u, r$ are on a path in $G$ in that order, implying that $u'$ is 1-cut before the rotation.
	\item A final ZIG step in the first pass of (Local) Two-pass SplayTT, when bringing the final branching node to the root. In that case, $u$ was a branching node before rotating it to the root, so by definition, the child of $u$ on the root path of $v$ is 1-cut.
\end{itemize}

Let $T$ be an STT with root $r$, and let $T'$ be the result of calling $\tNodeToRoot{v}$ (using any of the four algorithms). Then $P(T,v,r)$ trivially holds, and \cref{p:stable-rotation} implies that $P(T',v,r)$ also holds.

To prove stability, it remains to show that the depth of $r$ in $T'$ is bounded. Since $P(T'',v,r)$ holds for every intermediate tree, the depth of $r$ can only increase when a rotation involving the (current) root is performed. It is easy to see that each variant performs at most three rotations or $\tsplayStep$ calls that involve the root, so the final depth of $r$ is at most six. We thus conclude that \algMRT{} and all three SplayTT variants are stable.

\section{Maintaining rooted trees with STTs}\label{sec:stts-rooted}

In contrast to link-cut trees, our implementation of STTs (as described in \cref{sec:2cut-impl}) cannot represent rooted trees without modification. Consider a call to $\opFindRoot{v}$ when $v$ is the root of its STT. Since $v$ has no separator children, both its child pointers are $\bot$, thus we cannot navigate to the root of the underlying tree (or \emph{any} node other than $v$, for that matter).
In this section, we show how we can implement $\opFindRoot{v}$ and other rooted-tree operations using extra data. The relevant part of the source code is found in \implfile{stt/src/twocut/rooted.rs}.

Let $G$ be a tree with a designated root $r$. Let $T$ be an STT on $G$. We store (a pointer to) the root $r$ in each node on the root path of $r$ in $T$. Formally, we maintain the following property for each node $v \in V(T)$:
\begin{align*}
	\tdroot(v) = \begin{cases}
		r, &\text{if } r \in V(T_v)\\
		\bot, &\text{otherwise}
	\end{cases}
\end{align*}

Implementing $\opFindRoot{v}$ is now trivial: Call $\tNodeToRoot{v}$, and then return $\tdroot(v)$.

We now describe how to update $\tdroot(v)$ under rotations. Consider a rotation of $v$ with its parent $p$. Let $T$, $T'$ denote the search tree before and after the rotation and let $\tdroot$, $\tdroot'$ denote the respective values before and after the rotation.

Observe that only $\tdroot(v)$ and $\tdroot(p)$ may change. Since $V(T'_v) = V(T_p)$, we have $\tdroot'(v) = \tdroot(p)$. For $\tdroot'(p)$, consider the following cases.
\begin{itemize}
	\item If $\tdroot(p) = \bot$, then $\tdroot'(p) = \bot$, since $p$ gains no new descendants with the rotation.
	\item If $\tdroot(p) \neq \bot$ and $\tdroot(v) = \bot$, then $r$ is in $V(T_c)$ for some child $c \neq v$ of $p$. Observe that $c$ is still a child of $p$ in $T'$, hence $\tdroot'(p) = \tdroot(p)$.
	\item Finally, suppose $\tdroot(p) \neq \bot$ and $\tdroot(v) \neq \bot$. Let $c$ be the direct separator child of $v$. If $c$ does exist and $\tdroot(c) \neq \bot$, then $r$ is in $V(T_c) = V(T'_c)$ and hence in $V(T'_p)$, so $\tdroot'(p) = \tdroot(p)$. Otherwise, $r = v$ or $r \in V(T_{c'})$ for some child $c' \neq c$ of $v$ in $T$, hence $\tdroot'(p) = \bot$.
\end{itemize}

We now sketch the implementations of the remaining operations. We refer to the code in \implfile{stt/src/twocut/rooted.rs} for more details.

$\opLink{u,v}$ works as described in \cref{sec:link-cut}, except that afterwards we set $\tdroot(u) \gets \bot$. Note that, by assumption, $u$ was the root of its underlying tree before the operation. After bringing $u$ to the root, each descendant $x$ of $u$ has $\tdroot(x) = \bot$. Then, $u$ becomes the child of $v$, so it is no longer the underlying tree root.

$\opCut{v}$ now only takes one parameter. If the parent $u$ of $v$ (in the underlying tree) is known, we can simply call $\tNodeToRoot{v}$ and $\tNodeBelowRoot{u}$, then detach $u$ from $v$ and set $\tdroot(v) \gets v$. However, we do not assume that $u$ is known. To find $u$, after calling $\tNodeToRoot{v}$, we first call $\tNodeBelowRoot{r}$ (note that $r = \tdroot(v)$). Then, we can find $u$ by first moving to the direct separator child of $r$, and then following indirect separator child pointers as long as possible. This moves along the underlying path from $r$ to $v$ (possibly skipping nodes), stopping at $u$. If $r$ has no direct separator child, then $u = r$.

For $\opLCA{u,v}$, we first call $\tNodeToRoot{v}$ and $\tNodeBelowRoot{u}$. Some simple checks determine whether $u$ is an ancestor of $v$ or vice versa. Otherwise, the LCA has to be in the direct separator child $x$ of $u$. Now, we check if $\tdroot(d) \neq \bot$ or $\tdroot(i) \neq \bot$, where $d$ and $i$ are the direct and indirect separator children of $x$. If either is true, we repeat with $x \gets i$, resp., $x \gets d$. Otherwise, it can be seen that $x$ must be the LCA of $u$ and $v$. Calling $\tNodeToRoot{x}$ at the end pays for following the root path to $x$ (via amortization).

Finally, $\opEvert{v}$ is implemented as follows. First, call $\tNodeToRoot{v}$. We then need to set $\tdroot(v) \gets\nobreak v$ and $\tdroot(x) \gets \bot$ for each node $x \neq v$. Observe that every node $x$ with $\tdroot(x) \neq \bot$ must be on the root path of $r$, hence we can make the necessary changes by following parent pointers from $r$. To pay for this, we call $\tNodeToRoot{r}$ afterwards.

\section{More detailed experimental results}\label{sec:detailed-results}

See \cref{tab:full_queries,tab:full_mst,tab:full_lca,tab:full_lca_evert} on \cpagerefrange{tab:full_queries}{tab:full_lca_evert}

\begin{table*}[b]
	\centering\small
		\begin{tabular}{lcccccccc}
		\toprule
		& \multicolumn{8}{c}{
Running time (\textmu{}s/query)}\\		\cmidrule{2-9}Algorithm & $n = 1000$ & $2000$ & $3000$ & $4000$ & $5000$ & $6000$ & $7000$ & $8000$\\
		\midrule
		Petgraph & $7.64$ & -- & -- & -- & -- & -- & -- & --\\
		Link-cut & $0.48$ & $0.53$ & $0.56$ & $0.58$ & $0.60$ & $0.62$ & $0.63$ & $0.65$\\
		Greedy Splay & $0.39$ & $0.44$ & $0.47$ & $0.48$ & $0.50$ & $0.51$ & $0.53$ & $0.54$\\
		Stable Greedy Splay & $0.36$ & $0.42$ & $0.44$ & $0.46$ & $0.48$ & $0.49$ & $0.50$ & $0.52$\\
		2P Splay & $0.41$ & $0.47$ & $0.49$ & $0.51$ & $0.53$ & $0.54$ & $0.55$ & $0.56$\\
		Stable 2P Splay & $0.37$ & $0.43$ & $0.46$ & $0.47$ & $0.49$ & $0.50$ & $0.51$ & $0.53$\\
		L2P Splay & $0.38$ & $0.43$ & $0.46$ & $0.48$ & $0.50$ & $0.51$ & $0.52$ & $0.54$\\
		Stable L2P Splay & $0.36$ & $0.42$ & $0.45$ & $0.46$ & $0.48$ & $0.49$ & $0.51$ & $0.52$\\
		MTR & $0.32$ & $0.36$ & $0.38$ & $0.39$ & $0.41$ & $0.42$ & $0.43$ & $0.44$\\
		Stable MTR & $0.31$ & $0.35$ & $0.37$ & $0.39$ & $0.40$ & $0.41$ & $0.43$ & $0.44$\\
		1-cut & $0.16$ & $0.27$ & $0.36$ & $0.46$ & $0.55$ & $0.64$ & $0.72$ & $0.77$\\
		\bottomrule
	\end{tabular}

	\caption{Results for the uniformly random connectivity queries experiment.}\label{tab:full_queries}
\end{table*}

\begin{table*}[b]
	\centering\small
		\begin{tabular}{lccccccc}
		\toprule
		& \multicolumn{7}{c}{
Running time (\textmu{}s/edge)}\\		\cmidrule{2-8}Algorithm & $n = 10000$ & $20000$ & $50000$ & $100000$ & $200000$ & $500000$ & $1000000$\\
		\midrule
		Kruskal (petgraph) & $0.09$ & $0.10$ & $0.14$ & $0.20$ & $0.32$ & $0.48$ & $0.63$\\
		Link-cut & $1.29$ & $1.41$ & $1.56$ & $1.90$ & $2.38$ & $3.00$ & $3.43$\\
		Greedy Splay & $1.03$ & $1.14$ & $1.27$ & $1.54$ & $1.97$ & $2.54$ & $2.94$\\
		Stable Greedy Splay & $1.01$ & $1.11$ & $1.24$ & $1.50$ & $1.92$ & $2.51$ & $2.90$\\
		2P Splay & $1.05$ & $1.16$ & $1.28$ & $1.54$ & $2.00$ & $2.64$ & $3.05$\\
		Stable 2P Splay & $1.00$ & $1.11$ & $1.23$ & $1.48$ & $1.95$ & $2.58$ & $2.99$\\
		L2P Splay & $1.05$ & $1.16$ & $1.29$ & $1.55$ & $2.00$ & $2.61$ & $3.01$\\
		Stable L2P Splay & $1.02$ & $1.13$ & $1.26$ & $1.53$ & $2.00$ & $2.61$ & $3.01$\\
		MTR & $0.84$ & $0.93$ & $1.03$ & $1.27$ & $1.69$ & $2.23$ & $2.59$\\
		Stable MTR & $0.85$ & $0.93$ & $1.04$ & $1.27$ & $1.70$ & $2.24$ & $2.60$\\
		1-cut & $0.61$ & $0.97$ & $1.52$ & $1.56$ & $2.16$ & $3.74$ & $5.32$\\
		\bottomrule
	\end{tabular}

	\caption{Results for the uniformly random incremental MSF experiment.}\label{tab:full_mst}
\end{table*}

\begin{table*}[b]
	\centering\small
		\begin{tabular}{lccccccc}
		\toprule
		& \multicolumn{7}{c}{
Running time (\textmu{}s/query)}\\		\cmidrule{2-8}Algorithm & $n = 10000$ & $20000$ & $50000$ & $100000$ & $200000$ & $500000$ & $1000000$\\
		\midrule
		Link-cut & $0.28$ & $0.31$ & $0.34$ & $0.36$ & $0.46$ & $0.67$ & $0.83$\\
		Greedy Splay & $0.38$ & $0.42$ & $0.45$ & $0.48$ & $0.59$ & $0.79$ & $0.94$\\
		2P Splay & $0.40$ & $0.43$ & $0.48$ & $0.51$ & $0.67$ & $0.94$ & $0.95$\\
		L2P Splay & $0.38$ & $0.41$ & $0.46$ & $0.49$ & $0.65$ & $0.92$ & $0.93$\\
		MTR & $0.31$ & $0.34$ & $0.37$ & $0.40$ & $0.54$ & $0.79$ & $0.85$\\
		Simple & $1.63$ & $2.59$ & -- & -- & -- & -- & --\\
		\bottomrule
	\end{tabular}

	\caption{Results for the LCA experiment.}\label{tab:full_lca}
\end{table*}

\begin{table*}[b]
	\centering\small
		\begin{tabular}{lccccccc}
		\toprule
		& \multicolumn{7}{c}{
Running time (\textmu{}s/query)}\\		\cmidrule{2-8}Algorithm & $n = 10000$ & $20000$ & $50000$ & $100000$ & $200000$ & $500000$ & $1000000$\\
		\midrule
		Link-cut & $0.50$ & $0.55$ & $0.61$ & $0.65$ & $0.79$ & $1.11$ & $1.34$\\
		Greedy Splay & $0.47$ & $0.51$ & $0.55$ & $0.59$ & $0.71$ & $0.95$ & $1.12$\\
		2P Splay & $0.48$ & $0.52$ & $0.57$ & $0.60$ & $0.72$ & $0.96$ & $1.13$\\
		L2P Splay & $0.47$ & $0.51$ & $0.56$ & $0.59$ & $0.72$ & $0.96$ & $1.13$\\
		MTR & $0.36$ & $0.40$ & $0.43$ & $0.46$ & $0.59$ & $0.83$ & $1.00$\\
		Simple & $1.44$ & $2.17$ & -- & -- & -- & -- & --\\
		\bottomrule
	\end{tabular}

	\caption{Results for the LCA experiment with \opEvert{}.}\label{tab:full_lca_evert}
\end{table*}

\end{document}